\documentclass[a4paper,11pt]{article}

\usepackage{typearea}
\usepackage{setspace}

\usepackage{fullpage}

\usepackage{comment} 
\usepackage{bbm}
\usepackage{framed} 
\usepackage{url} 
\usepackage{complexity}
\usepackage{booktabs}
\usepackage{amsmath,amssymb}
\usepackage{float}

\usepackage{amsthm}
\usepackage{thmtools} 
\usepackage{thm-restate}
\usepackage{nicefrac}
\usepackage{calc}
\usepackage{enumerate}
\usepackage{enumitem}
\usepackage[usenames,dvipsnames]{xcolor}

\usepackage[ruled,vlined,linesnumbered]{algorithm2e}
\usepackage{forloop}

\usepackage{graphicx}
\usepackage[font=footnotesize,labelfont=bf]{subcaption}
\usepackage[font=footnotesize,labelfont=bf]{caption}
\usepackage[nobreak=true]{mdframed}
\usepackage{appendix}
\usepackage[noend]{algpseudocode}
\usepackage[colorinlistoftodos]{todonotes}
\usepackage{xr}
\usepackage{array}
\usepackage{xspace}
\definecolor{ForestGreen}{rgb}{0.1333,0.5451,0.1333}
\definecolor{DarkRed}{rgb}{0.8,0,0}
\definecolor{Red}{rgb}{1,0,0}
\usepackage[linktocpage=true,
pagebackref=true,colorlinks,
linkcolor=DarkRed,citecolor=ForestGreen,
bookmarks,bookmarksopen,bookmarksnumbered]{hyperref}

\usepackage[capitalise]{cleveref}
\crefrangelabelformat{enumi}{#3#1#4--#5#2#6}
\usepackage{parskip}
\usepackage{chngcntr}
\usepackage{mathtools,stackengine}
\usepackage{multirow}

\stackMath
\newcommand{\stackGeq}[1]{%
	\setbox0=\hbox{${}\mathrel{\stackon[-1pt]{\geq}{\scriptstyle\text{#1\strut}}}{}$}
	\xdef\tmpwd{\dimexpr\the\wd0\relax}
	\kern.5\tmpwd\mathclap{\box0}&\kern.5\tmpwd
}

\usepackage{nameref}

\usepackage{tikz}
\usetikzlibrary{patterns}

\allowdisplaybreaks

\DeclareMathOperator*{\expectation}{\mathbb{E}}

\newcommand{\expect}{\expectation\expectarg}
\DeclarePairedDelimiterX{\expectarg}[1]{[}{]}{%
	\ifnum\currentgrouptype=16 \else\begingroup\fi
	\activatebar#1
	\ifnum\currentgrouptype=16 \else\endgroup\fi
}

\DeclarePairedDelimiterX{\nicesetarg}[1]{\{}{\}}{%
	\ifnum\currentgrouptype=16 \else\begingroup\fi
	\activatebar#1
	\ifnum\currentgrouptype=16 \else\endgroup\fi
}

\newcommand{\innermid}{\nonscript\;\delimsize\vert\nonscript\;}
\newcommand{\activatebar}{%
	\begingroup\lccode`\~=`\|
	\lowercase{\endgroup\let~}\innermid 
	\mathcode`|=\string"8000
}

\newcommand\opt{\textsc{Opt}\xspace}

\newcommand\alg{\textsc{Alg}\xspace}

\counterwithin{equation}{section}

\usepackage{eqparbox}

\theoremstyle{plain}

\newtheorem{theorem}{Theorem}[section]

\newtheorem{claim}[theorem]{Claim}

\newlength{\continueindent}
\usepackage{etoolbox}
\makeatletter
\newcommand*{\ALG@customparshape}{\parshape 2 \leftmargin \linewidth \dimexpr\ALG@tlm+\continueindent\relax \dimexpr\linewidth+\leftmargin-\ALG@tlm-\continueindent\relax}
\apptocmd{\ALG@beginblock}{\ALG@customparshape}{}{\errmessage{failed to patch}}
\makeatother

\makeatletter
\def\thm@space@setup{%
	\thm@preskip=\parskip \thm@postskip=0pt
}
\makeatother

\usepackage{etoolbox}
\usepackage{tikz}
\usetikzlibrary{tikzmark}
\usetikzlibrary{calc}

\errorcontextlines\maxdimen

\newcommand{\ALGtikzmarkcolor}{black}
\newcommand{\ALGtikzmarkextraindent}{4pt}
\newcommand{\ALGtikzmarkverticaloffsetstart}{-.5ex}
\newcommand{\ALGtikzmarkverticaloffsetend}{-.5ex}
\makeatletter
\newcounter{ALG@tikzmark@tempcnta}

\newcommand\ALG@tikzmark@start{%
	\global\let\ALG@tikzmark@last\ALG@tikzmark@starttext%
	\expandafter\edef\csname ALG@tikzmark@\theALG@nested\endcsname{\theALG@tikzmark@tempcnta}%
	\tikzmark{ALG@tikzmark@start@\csname ALG@tikzmark@\theALG@nested\endcsname}%
	\addtocounter{ALG@tikzmark@tempcnta}{1}%
}

\def\ALG@tikzmark@starttext{start}
\newcommand\ALG@tikzmark@end{%
	\ifx\ALG@tikzmark@last\ALG@tikzmark@starttext
	\else
	\tikzmark{ALG@tikzmark@end@\csname ALG@tikzmark@\theALG@nested\endcsname}%
	\tikz[overlay,remember picture] \draw[\ALGtikzmarkcolor] let \p{S}=($(pic cs:ALG@tikzmark@start@\csname ALG@tikzmark@\theALG@nested\endcsname)+(\ALGtikzmarkextraindent,\ALGtikzmarkverticaloffsetstart)$), \p{E}=($(pic cs:ALG@tikzmark@end@\csname ALG@tikzmark@\theALG@nested\endcsname)+(\ALGtikzmarkextraindent,\ALGtikzmarkverticaloffsetend)$) in (\x{S},\y{S})--(\x{S},\y{E});%
	\fi
	\gdef\ALG@tikzmark@last{end}%
}

\apptocmd{\ALG@beginblock}{\ALG@tikzmark@start}{}{\errmessage{failed to patch}}
\pretocmd{\ALG@endblock}{\ALG@tikzmark@end}{}{\errmessage{failed to patch}}
\makeatother

\algblock[with]{With}{EndWith}
\algblockdefx[With]{With}{EndWith}%
[1]{\textbf{with} #1 \textbf{do}}%
{}

\makeatletter
\ifthenelse{\equal{\ALG@noend}{t}}%
{\algtext*{EndWith}}
{}%
\makeatother

\usepackage{geometry}
\usepackage{caption}
\usepackage{verbatim}
\usepackage{amsmath}
\usepackage{amsfonts}   
\usepackage{amssymb}    
\usepackage[utf8]{inputenc} 
\usepackage{lastpage}
\usepackage{fancyhdr}
\usepackage{graphicx}  
\usepackage{color}     
\usepackage{enumitem}
\usepackage{cite}      
\usepackage{hyperref}  
\usepackage{longtable}
\usepackage{listings}
\usepackage{bold-extra}
\usepackage{mathrsfs}
\usepackage[percent]{overpic}
\usepackage{amsthm}
\renewcommand{\thefootnote}{\fnsymbol{footnote}}

\date{\today}

\author{
Yossi Azar\footnotemark[1]
\quad
Itamar Biran\footnotemark[1]
\quad
Amos Fiat\footnotemark[1]
}
\title{Multi Choice Min Prophet}
\begin{document}

\maketitle
\footnotetext[1]{Blavatnik School of Computer Science, Tel Aviv University, Israel.
Emails: \texttt{azar@tau.ac.il}, \texttt{itamarbiran@mail.tau.ac.il}, \texttt{fiat@tau.ac.il}.}

\renewcommand{\thefootnote}{\arabic{footnote}}
\setcounter{footnote}{0}
\begin{abstract}
  The prophet inequality is a fundamental problem in optimal stopping theory. Given $n$ independent variables drawn from known distributions, a player observes values sequentially and must decide irrevocably whether to stop and accept the current value or continue. The goal is to select a single element while maximizing the ratio between the value chosen and that of the maximum value in the sequence. In this paper, we study the minimization counterpart, often termed the min prophet or cost prophet inequality. Unlike the maximization setting, where simple threshold algorithms achieve half of the prophet's value, the minimization setting is significantly harder, with an exponential lower bound even for i.i.d.\ variables.

 We study a multi-choice relaxation in which the algorithm may select multiple variables and gets to choose the best amongst them (the minimum amongst those selected). Our goal is to minimize the expected number of selections while achieving a constant competitive ratio.
For adversarial order, we show that a constant competitive ratio requires a nearly linear number of choices in expectation, ergo, $\Omega(n/\ln n)$.
In contrast, we show that for the prophet secretary model (random order) one can attain constant competitiveness while requiring only an exponentially smaller expected number of choices i.e. $O(\ln n)$.
We give a refined analysis and define $M$ to be the ratio of the minimum expected value of any single variable to the expected minimum value of all variables (the prophet's value) and present an algorithm that achieves a constant competitive ratio with $O(\min\{\ln \ln M, \ln n\})$ choices in expectation for the prophet secretary. We show that this is tight up to low order log factors even for the special case of the i.i.d. model. Specifically, the lower bound on the expected number of choices for any constant competitive algorithm is $\Omega(\min\{\ln\ln M/\ln \ln \ln M,\ln n/\ln\ln n\})$. 
We also show that if we insist on a deterministic bound on the number of choices then every constant competitive algorithm requires $n$ choices. This holds even in the i.i.d.\ setting and shows that to achieve a constant competitive algorithm there is an exponential gap between the lower bound on the deterministic number of choices and the upper bound on the expected number of choices.
 
 Finally, we consider a variant where both the algorithm and the adversary choose  $r$ values and pay their sum, this is the minimization multi unit version. We extend our techniques to the multi-unit variant for i.i.d.\ variables, achieving a constant competitive ratio with a small expected number of choices.

\end{abstract}
\section{Introduction}
Consider the following problem, there are $n$ rewards 
$X_1,\dots,X_n$ drawn independently from known distributions 
$F_1,\dots,F_n$. The player observes the values one by one and may stop at 
any point, collecting the last observed reward. The goal is to maximize the 
payoff. This problem is well known and is called the prophet inequality \cite{KrengelSucheston1977,KrengelSucheston1978}. The optimum offline solution (prophet’s expected payoff) is $\mathbb{E}[\max_j X_j]$. For general variables, the best online algorithms achieve at least half of the optimum prophet's value \cite{KrengelSucheston1977,KrengelSucheston1978,SamuelCahn84}. For random order (also called prophet secretary) the bound is better (between $0.688$ \cite{LT25} and $0.723$ \cite{FredM23}). For i.i.d. variables, the optimal online algorithm achieves $0.745$ \cite{HillK82,KERTZ198688,Cor2017} . 
This result has been extended in many directions, including multi-choice 
prophet inequalities \cite{AssafSamuelCahn2000}, multi-unit prophet settings \cite{Alaei2014}, and resource augmentation \cite{BrustleEtAl2024}. 

In this work, we study the minimization counterpart, sometimes called the 
cost prophet inequality~\cite{LivanosMehta2022}. Here the $X_i$’s represent costs, 
and the player aims to minimize the selected value. For intuition, one may think of a traveler deciding when to buy a flight 
ticket: prices arrive sequentially, and upon seeing one the traveler must 
decide immediately whether to purchase it or continue waiting. We study the three main variants for prophet inequalities: adversarial order, prophet secretary (random order arrival), and i.i.d. variables.

One of the key differences between the minimization and maximization problems is that in minimization the algorithm must choose a variable, otherwise it incurs a cost of $\infty$. Unlike the 
reward version, the minimization problem behaves very differently: while for
certain i.i.d.\ distributions one can obtain a constant competitive ratio for example distributions with monotone hazard rate\cite{LivanosMehta2022}, for
general distributions it has been shown that even under i.i.d.\ assumptions 
only an exponential competitive ratio is achievable~\cite{EsfandiariEtAl2015}. This striking 
gap motivates our investigation of what guaranties can be obtained without such distributional assumptions.
  
We consider the multi-choice prophet inequality, in which the algorithm can choose multiple items. We compare the expected cost of the algorithm on the minimum of its choices (which is the last choice\footnote{Without loss of generality, the algorithm’s cost can be taken as the value of
its last accepted choice, since selecting a larger value than already chosen never improves
the outcome.}) to the prophet's cost. The case of a single choice corresponds to the classic cost prophet inequality where $n$ corresponds to the offline version. 

Our goal in this paper is to study how many choices are needed to achieve a constant competitive ratio for the min prophet problem, defined as the expected cost of the algorithm (the expected minimum of the choices made) over the expected cost of the optimal algorithm. 
We  say that algorithm $\alg$ is  $c$-competitive if  \[\mathbb{E}(\mbox{\rm cost}(\alg)) \le c \cdot \mathbb{E}(\mbox{\rm cost}(\opt))= c\cdot\mathbb{E}[\min_j X_j].\] Since the choices made by the algorithm are adaptive, we view the number of choices as a random variable, and we are interested in its expectation.


To state our results, we use the parameter $M$ which captures the power of the minimum of $n$ variables compared with a single variable. Specifically, this parameter is
\[M=\min_j\expect{X_j}/\expect{\min_j X_j}.
\]
Intuitively, $M$ measures how much better the global minimum is, in expectation,
over a single draw. For many light-tailed distributions $M$ grows only
polynomially with $n$, while for heavy-tailed distributions $M$ can be much
larger. We illustrate this for concrete examples (uniform and truncated
heavy-tailed distributions) in Appendix~\ref{App: M examples}.

First, we show that for adversarial order, a large number of choices are required. Specifically, any constant competitive algorithm requires  $\Omega(\frac{n}{\ln n})$ choices in expectation. Surprisingly, for the random order (prophet secretary model), one can get an exponential improvement. Our main result is that a constant competitive ratio is possible with $O(\min\{\ln\ln M,\ln n\})$ choices in expectation, and this is the best possible (up to low order factors). Interestingly, the lower bound works even for i.i.d. variables (in which case the order is irrelevant).
Moreover, we show that if we insist on a deterministic upper bound on the number of choices, then, to achieve a constant competitive algorithm $n$ choices are needed, even for i.i.d. variables.
The most technically challenging problem is the lower bound on the expected number of choices.

\subsection{Our results}

We obtain several new results for the min prophet inequality under the multi-choice variant and the multi-unit variant. Recall that $n$ is the number of variables and $M$ is the ratio between the minimum of the expectation and the expectation of the minimum.
\paragraph*{Multi-choice setting} 
\begin{itemize}
\item For adversarial order, we establish that any constant competitive algorithm requires 
$\Omega(\frac{n}{\ln n})$ choices in expectation. See Section \ref{sec:non_iid}.
\item
For the prophet secretary model (variables that arrive at a random order), we present a simple threshold algorithm (Section \ref{sec:algorithm}) that achieves a 
constant competitive ratio using only 
$O\!\left(\min\{\ln\ln M, \ln n\}\right)$ choices in expectation. 
To complement this result, we show that even in the i.i.d.\ setting, any constant-competitive algorithm requires at least 
\begin{equation}\Omega\!\left(\min\left\{\tfrac{\ln\ln M}{\ln\ln\ln M}, \tfrac{\ln n}{\ln\ln n}\right\}\right)=\widetilde{\Omega}\left(\min\{\ln\ln M, \ln n\}\right)\nonumber\end{equation}
choices in expectation, 
establishing that our result is tight up to a low order factor. See Section \ref{sec:lowerbound_expected}.
\item We show that ``choices in expectation" above is required even for i.i.d. variables. We prove that no algorithm that deterministically makes less than  $n$ choices can be
constant-competitive (see Section \ref{sec:worstcase}). In combination with the upper bound above, this shows
an exponential gap between requiring a deterministic and an expected number of choices. 

\end{itemize}

These results illustrate the fundamental gap between maximization 
and minimization prophet inequalities. In particular, while the classical reward 
setting admits constant-competitive algorithms without multiple choices, the cost 
setting provably requires multiple choices, and any non-trivial result is possible only under the relaxation that the number of choices holds in expectation. Moreover, in the maximization version, the  competitive ratios for adversarial and random order settings differ by a constant factor. Conversely, to achieve any constant competitive ratio for the minimization problem, the two settings (adversarial vs. random order) are very different, and there is an exponential gap between the number of choices required in the two settings. 

\paragraph*{Multi-unit setting}
We generalize the multi choice problem to the {\sl multi unit min prophet problem}. In this setting, both the online algorithm and the adversary have to collect $r$ items and may replace previously chosen items with other choices. One may think of this setting as having a series of tasks that need to be done, and one must do $r$ of these tasks. The cost associated with the $r$ items eventually chosen is the sum of their values.

The algorithm may make multiple choices but always keeps the $r$ smallest cost items amongst them. As above, we are interested in the number of choices made in expectation and the competitive ratio attained. The multi unit problem reduces to the multi choice problem for $r=1$.

We design a threshold-based algorithm, closely related to that of the multi-choice setting, that achieves a constant competitive ratio for i.i.d.\ variables while using only a small expected number of choices:
If $r \le \ln M$, the algorithm achieves a constant competitive ratio with
    \[
    O\!\left(\min\!\left\{r\ln(\tfrac{\ln M}{r}),\;r\ln (\tfrac{n}{r})\right\}\right)
    \]
    choices in expectation.
If $r \ge \ln M$, the algorithm achieves a constant competitive ratio with only $O(r)$ choices in expectation. See Appendix~\ref{sec:multiunit_algorithm}. Moreover, for any $1\leq r\leq n-1$, we show that if the algorithm is restricted to making  $r$ selections in total, i.e., no extra choices, then no constant competitive ratio is possible (even for i.i.d.\ inputs). See Appendix~\ref{sec:multiunit_lowerbound}.

\subsection{Techniques}
For the upper bound, we employ a threshold-based algorithm, calibrated so that the expected number of relevant samples (below the threshold) is logarithmic in $M$. We analyze the expected number of choices using record statistics, relying on Jensen's inequality to bound the updates to $O(\min\{\ln \ln M, \ln n\})$.

For our lower bounds, we construct hard instances using heavy-tailed distributions. For the  adversarial order lower bound, we utilize $n$ distinct heavy-tailed continuous distributions to demonstrate that an algorithm must switch its selection with high probability at nearly every step to avoid incurring a large cost. For the lower bound on the expected number of choices  in the i.i.d. setting we divide the $n$ variables into intervals of increasing size that depend on $M$ and $n$, such that, in total, there are $\Omega(\min\{\ln\ln M/\ln \ln \ln M,\ln n/\ln\ln n\})$ intervals. We prove that any constant competitive algorithm must make a choice in each of these intervals with very high probability, thereby establishing the required lower bound. For the lower bound with deterministic number of choices, we construct a discrete distribution with super-exponentially growing values. We show that for any algorithm restricted to $n-1$ choices, there exists a specific realization sequence where the algorithm fails to select the minimum, leading to an unbounded competitive ratio.

\subsection{Organization}

The remainder of the paper is organized as follows. Section~\ref{sec:preliminaries} formalizes the multi-choice minimization prophet setting.
Section~\ref{sec:non_iid} shows that for adversarial order any constant-competitive algorithm must perform $\Omega({n}/{\ln n})$ choices in expectation. Section~\ref{sec:algorithm} presents our constant-competitive threshold algorithm and analyzes both its performance and the expected number of choices it makes. Section~\ref{sec:lowerbound_expected} complements this with a nearly tight lower bound on the expected number of choices required by any constant-competitive algorithm. Section~\ref{sec:worstcase} turns to deterministic choices, proving that any algorithm restricted to at most $n-1$ choices cannot achieve a constant competitive ratio.  
Some proof are deferred to the Appendices \ref{app: non_iid}, \ref{app: lowerbound_expected}, Appendices~\ref{sec:multiunit_algorithm} and~\ref{sec:multiunit_lowerbound} focus on the multi-unit version of the problem: 
Appendix~\ref{sec:multiunit_algorithm} presents an algorithm that is almost identical to the threshold-based algorithm of Section~\ref{sec:algorithm}, proving that it achieves a constant competitive ratio while using only a small expected number of choices, 
and Appendix~\ref{sec:multiunit_lowerbound} establishes an exponential lower bound, showing that when the algorithm is allowed only $r$ selections, no constant competitive ratio can be guaranteed. Finally, Appendix \ref{App: M examples} gives  values of $M$ for some specific distributions.

\subsection{Related work}

\paragraph*{Min prophet inequality}
The first study of the min prophet problem (also called the cost prophet inequality) was by Assaf et al.\ \cite{Assaf_Goldstein_Samuel-Cahn_2004}, 
who considered the case of two choices under certain distributions and compared the algorithm’s 
performance with both the one-choice version and the prophet’s value. 
In contrast to the maximization setting, where constant competitive ratios are achievable, 
their results already suggested that the minimization variant is significantly harder. 
After a long gap, Esfandiari et al.\ \cite{EsfandiariEtAl2015} established that even in the 
i.i.d.\ setting the competitive ratio can be exponential, highlighting a stark difference from 
the reward prophet inequality, where a ratio of 2 is attainable even for non i.i.d.\ distributions. 
This result shifted the research focus towards identifying structural assumptions—such as 
distribution classes or fractional cost models, under which constant competitive guarantees  may still be possible. 

Livanos and Mehta \cite{LivanosMehta2022} analyzed the optimal algorithm for i.i.d.\ variables 
and showed that constant guarantees can be obtained under additional assumptions on the 
distributions, specifically through properties of the hazard rate. 
More recently, Livanos \cite{Livanos2024} extended this line of work using extreme value theory. 
For distributions where the maxima and minima of a sample converge in distribution, they proved 
that for single-threshold algorithms the optimal competitive ratio in the i.i.d.\ setting is 
polylogarithmic in~$n$, and that in the $k$-multi-unit minimization prophet inequality 
there exist constant-competitive single-threshold algorithms whenever $k \geq \log n$. 

Another related variant is the prophet inequality with monomial cost functions, in which each 
random variable corresponds to a cost function and the player may purchase a fractional amount 
$x_i \in [0,1]$ from each, subject to the total mass summing to 1. 
Qin et al.\ \cite{Junjie24} analyzed this model and, under mild regularity conditions on the 
distributions, proved that one can obtain a competitive ratio approaching~1 asymptotically.
In our work, all proofs hold for arbitrary distributions.

\paragraph*{Max prophet inequality}
The classical prophet inequality was introduced by Krengel and Sucheston 
\cite{KrengelSucheston1977,KrengelSucheston1978}, who showed that a gambler observing 
$n$ independent random variables can guarantee at least half of the expected maximum. 
The optimal asymptotic competitive ratio for the i.i.d.\ prophet inequality is
\(\alpha \approx 0.745\), obtained by combining the Hill--Kertz upper bound \cite{HillK82} with the matching threshold-algorithm guarantee of Correa et al. \cite{Cor2017}. 

An intermediate variant is the \emph{prophet secretary} problem, where variables arrive 
in uniformly random order. Esfandiari et al.\ \cite{EsfandiariEtAl2015} designed an 
adaptive-threshold algorithm achieving a $(1-1/e)$-competitive ratio. Azar et al. and later Chen et al. \cite{Azar18,LT25} improved this bound to $0.688$.   Correa et al.\ \cite{Correa21} 
established that no algorithm can beat~$0.732$. Further progress has been made for multiple 
selections \cite{Arnosti23} and for broader feasibility constraints such as matroids 
\cite{Eshani2024,Marek20}. Yet another model, the free-order 
setting (where the algorithm can choose the order) was also studied in \cite{Abolhassani17}.

The most general problem is that of adversarial order. As for prophet secretary, many generalizations have
been considered. The classical $1/2$ guarantee extends to general downward-closed 
systems, including matroids \cite{KLEINBERG2019}. In the special case where the player 
may select up to $k$ values and is evaluated on their sum, Alaei \cite{Alaei2014} proved 
a $(1-\sqrt{1/(k+3)})$-competitive algorithm, later improved for small $k$ \cite{Chawala24} 
and shown to be tight for all $k$ \cite{Jiashou22}. A related model evaluates the player 
on the maximum of $k$ choices: Assaf and Samuel-Cahn \cite{AssafSamuelCahn2000} showed 
that in this setting the gambler can secure more than $\tfrac{k}{k+1}$ of the prophet’s value. 
 Ezra et al.\ \cite{Ezra18} introduced the 
$\ell$-out-of-$k$ framework, which interpolates between these models by allowing up to 
$k$ tentative selections while only the best $\ell$ are evaluated. In the special case 
$\ell=1$, they improved the bound of Assaf and Samuel-Cahn by giving an algorithm that 
achieves $1-\tfrac{3}{2}e^{-k/6}$ of the prophet’s value. This guarantee was later improved in \cite{Harb25}; related stronger guarantees for oracle-augmented variants were obtained in \cite{HarbPeled24}.
Another relevant direction is the prophet inequality with \emph{recourse} (or buyback), 
where the decision maker can discard previously selected elements to accept new ones, 
often subject to a cancellation cost. Recent work by Ekbatani et al.\ \cite{EkbataniNNV24} 
formalized this setting, showing that the ability to cancel decisions allows for 
improved competitive ratios in settings where standard irrevocable algorithms are limited. 
Finally, another approach is \emph{resource augmentation}, where the algorithm is granted 
multiple samples or additional copies of the instance. Brüstle et al.\ \cite{BrustleEtAl2024} 
proved that block-threshold algorithms require only 
$\Theta(\log\log(1/\varepsilon))$ extra copies to obtain a $(1-\varepsilon)$-approximation.
\section{Preliminaries}
\label{sec:preliminaries}
In this section, we will formalize the problems considered herein.
We introduce the parameter $M$, defined 
as \[M=\min_j\expect{X_j}/ \, \expect{\min_j X_j}.
\]
Note that for i.i.d. variables for all $j$ the values $\expect{X_j}$ are the same, so the minimum in the numerator becomes redundant.

We assume that $\min_j\expect{X_j} > 0$,
otherwise there exists $X_i$ s.t. $\Pr[X_i=0] = 1$ and in this case the algorithm is trivial.
Unless stated otherwise, we assume that all random variables 
$X_1,\dots,X_n$ are independent. Note that the algorithm must choose at least one variable otherwise the competitive ratio is infinity. In general the distributions $F_1,\dots,F_n$ 
may be arbitrary. We consider adversarial order, random arrival (prophet secretary), and the i.i.d. model. 

\paragraph*{Multi Choice}

We study a multi-choice variant of the prophet inequality. 
An online algorithm is given $n$ random variables 
$X_1, X_2, \dots, X_n$ drawn independently from distributions 
$F_1, F_2, \dots, F_n$, arriving sequentially. 
The algorithm is allowed to make multiple selections. 
Its cost is defined as the minimum chosen value. 
Formally, if the algorithm selects indices $i_1, \dots, i_k$, then its cost is
\[
\text{ALG} = \min \{ X_{i_1}, \dots, X_{i_k} \}.
\]
Obviously, no algorithm will ever choose an item of higher cost than a current option, so the minimal cost item will always be the last chosen. 

The benchmark is the \emph{prophet}, who observes all $n$ values in advance 
and pays
\[
\text{OPT} = \min_{1 \leq i \leq n} X_i.
\]
We compare the expected cost of the algorithm to the expected cost of the prophet 
i.e.,
\[
\sup_{F_1,\dots,F_n} 
\frac{\mathbb{E}[\text{ALG}]}{\mathbb{E}[\text{OPT}]}.
\]

Special cases of this model include:
\begin{itemize}
    \item single choice: the classic prophet inequality for minimization.
    \item $n$ choices: the offline optimum, where the algorithm is equivalent to the prophet.
\end{itemize}

Our goal is to determine how many choices are required (in expectation) for an algorithm to guarantee 
a constant competitive ratio. If we insist on a deterministic upper bound on the number of choices and seek a constant competitive ratio, then $n-1$ choices do not suffice, even in the i.i.d.\ setting. 

\paragraph*{Multi unit}

In the \emph{multi-unit minimization} variant, the algorithm must  acquire  \emph{exactly} $r$ items, where each selection represents purchasing one unit of the item.  
The total cost is the sum of the $r$ smallest values among the items it has selected.  Formally, if the algorithm makes $k$ selections at indices $i_1,\dots,i_k$ (with $k\ge r$), 
its cost is
\[
\text{ALG} = \sum_{j=1}^{r} X_{(j)}^{\text{ALG}},
\]
where $X_{(1)}^{\text{ALG}} \le X_{(2)}^{\text{ALG}} \le \dots \le X_{(k)}^{\text{ALG}}$ 
are the values of the items chosen by the algorithm.  
That is, the algorithm’s cost is determined by the smallest $r$ values among its $k$ selections.

The benchmark (the prophet) observes all values in advance and pays
\[
\text{OPT} = \sum_{j=1}^{r} X_{(j)},
\]
where $X_{(1)} \le X_{(2)} \le \dots \le X_{(n)}$ denote the order statistics of 
$\{X_1,\dots,X_n\}$.  
Thus, the prophet always takes the $r$ globally smallest items. The competitive ratio in this model is defined as
\[
\text{CR}_{\text{multi-unit}}(r)
=\sup_{F_1,\dots,F_n}\frac{\mathbb{E}[\text{ALG}]}{\mathbb{E}[\text{OPT}]}.
\]

This model captures settings in which the decision-maker must procure multiple units
(e.g., resources or tasks) and seeks to minimize the \emph{total} cost,
rather than a single value as in the multi-choice case.  

As shown later, if the algorithm is forced to choose \emph{exactly} $r$ values,
no algorithm can guarantee better than an \emph{exponential} competitive ratio. This impossibility motivates relaxing the model so that the algorithm may make any number of selections and instead aims to \emph{minimize the expected number of choices} required 
to achieve a constant competitive ratio.

In our algorithm, we assume that the distributions are continuous. This is not a restriction, as if the distribution contains atoms, then we use the standard technique of perturbing the atoms (or just make the algorithm randomized at that point of mass). Also, our lower bounds use distributions with atoms of probability mass, but the distributions can be perturbed slightly, giving an equivalent bound with continuous distributions.

\section{Adversarial order: almost linear lower bound on the expected number of choices}
We analyze the adversarial order and show that any constant-competitive algorithm for arbitrary independent variables must make at least an almost linear number of choices in \(n\).
\label{sec:non_iid}

\begin{theorem}
Any algorithm for arbitrary independent (but not identical) variables that is constant competitive requires $\Omega\left(\frac{n}{\ln n}\right)$ choices in expectation.
\end{theorem}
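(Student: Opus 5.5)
The plan is to build a single hard instance with $n$ distinct, heavy-tailed two-point distributions, presented in a fixed adversarial order. Each variable is either ``small'' (much smaller than every earlier small value) or ``huge''. The instance is tuned so that for a constant fraction of indices $i$, the event ``$X_i$ is small and is the overall minimum'' carries a comparable share of $\mathbb{E}[\min_j X_j]$. An algorithm that fails to select $X_i$ when it is small then pays, on that event, at least the previous small value, which is larger by a huge factor. Constant competitiveness will therefore force the algorithm to select almost every small value. The expected number of choices is then at least the expected number of small values that occur, and we will arrange this to be $\Omega(n/\ln n)$.

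\textbf{Construction.} Let $X_i = v_i$ with probability $p_i$ and $X_i = H$ (a huge value) otherwise. The small values are $v_i = K^{-i}$ for a large ratio $K$, possibly polynomial in $n$, to be fixed later. Writing $T_i = \prod_{j>i}(1-p_j)$, we have $\Pr[\text{the minimum is } v_i] = p_i T_i$. Choose the $p_i$ so that the contributions $p_i T_i v_i$ are equal, or at least within constant factors, over a block of $\Theta(n)$ indices. The first attempt will be $p_i \approx \frac{\ln n}{n}$-type densities, or a geometric profile with $T_i \propto K^{i}\cdot(\text{normalization})$. These should be balanced so that $\sum_i p_i = \Theta(n/\ln n)$ while each contribution remains $\Theta(\mathbb{E}[\textsc{Opt}]\cdot \frac{\ln n}{n})$. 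Set $H$ so large, and give $X_1$ a small value with enough probability, that the event ``all variables are huge'' contributes negligibly. We also verify that $\mathbb{E}[\textsc{Opt}]$ is, up to a constant factor, the sum of these contributions.

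\textbf{Forcing selections.} Fix any online algorithm and let $s_i$ be the probability that it does not select $X_i$ given $X_i = v_i$. This decision is made before later variables are seen, so by independence it is independent of whether any later variable is small. Hence, with probability at least $s_i\, p_i T_i$, the algorithm ends with cost at least $v_{i-1} = K v_i$, or $H$ if nothing was selected earlier. These failure events are disjoint for different $i$, since ``$v_i$ is the minimum'' determines $i$. Summing gives
\[
\mathbb{E}[\textsc{Alg}] \;\ge\; \sum_i s_i\, p_i T_i\, K v_i \;\approx\; K\,\frac{\mathbb{E}[\textsc{Opt}]}{m}\sum_{i\in B} s_i ,
\]
where $B$ is the balanced block and $m=|B|$. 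If $\mathbb{E}[\textsc{Alg}] \le c\,\mathbb{E}[\textsc{Opt}]$, then $\sum_{i\in B}s_i \le cm/K$. With $K$ a large constant times $c$, this leaves $s_i\le 1/2$ for most $i\in B$.

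\textbf{Counting choices.} The expected number of choices is at least $\sum_i p_i(1-s_i) \ge \tfrac12 \sum_{i \in B'} p_i$, where $B'\subseteq B$ is the set of indices with $s_i \le 1/2$. The remaining step is to show that this sum is $\Omega(n/\ln n)$.

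The main obstacle is the parameter balancing. Equal contributions $p_iT_iv_i$ with $v_{i-1}/v_i = K$ require $T_{i-1}/T_i = 1-p_i$ to compensate a factor $K$ per step. Since $T_i$ is a product of at most $n$ factors and must stay at least about $1/\mathrm{poly}(n)$ on the block, this limits how large $\sum p_i$ can be. I expect the optimum of this tradeoff to be exactly where the $\ln n$ loss appears: taking $K$ around a constant and allowing contributions to vary by a $\mathrm{poly}(n)$ factor overall, the constraint $\prod(1-p_i) \ge n^{-O(1)}$ yields $\sum p_i = O(\ln n)$ per ``scale''. The fix I would try first is to make the argument charge the losses more cleverly, comparing against earlier selections rather than only $v_{i-1}$. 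Concretely, I would analyze in blocks of length $\Theta(\ln n)$ so that a skip anywhere in a block costs a $\mathrm{poly}(n)$ factor, and require one selection per block. This gives $\Omega(n/\ln n)$ blocks, each contributing a constant to the expected number of choices. I would carry out this block version carefully, choosing $K = n^{\Theta(1)}$ across blocks and constant $p$ within a block.
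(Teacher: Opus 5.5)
Your framework is sound. The events $\{X_i=v_i,\ X_i\text{ not selected},\ X_j=H\ \forall j>i\}$ are disjoint. Each has probability exactly $s_ip_iT_i$, because the decision on $X_i$ is independent of $X_{i+1},\dots,X_n$. Each forces cost at least $v_{i-1}$ (or $H$ for $i=1$). Together with ``expected choices $\ge\sum_i p_i(1-s_i)$'', this is a complete reduction.

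The gap is that you never produce parameters that satisfy it, and the obstacle you describe is not real.
\begin{itemize}
\item Your first attempt cannot be balanced. With $v_{i-1}/v_i=K$, the condition $p_iT_iv_i=p_{i+1}T_{i+1}v_{i+1}$ is equivalent to $p_{i+1}=Kp_i/(1+Kp_i)$. Starting from $p_i\approx\ln n/n$, the probabilities grow by a factor of about $K$ per balanced step and reach the unique fixed point $1-\frac{1}{K}$ after $O(\log_K n)$ steps. So balancing pushes $\sum p_i$ up, not down.
\item The constraint that $T_i$ stay above $1/\mathrm{poly}(n)$ has no source. You only compare contributions with one another and with $\mathbb{E}[\textsc{Opt}]$, which is their sum, so absolute sizes are irrelevant. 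Your own block plan has $\prod_i(1-p_i)=e^{-\Theta(n)}$ anyway.
\item The block version, as sketched, omits the one thing that matters. With constant $p$ on blocks of length $L$ and scale ratio $K$ between blocks, consecutive block weights differ by the factor $K(1-p)^L$. This factor must be $1\pm O(\ln n/n)$. For a generic $K=n^{\Theta(1)}$ the weights are geometric with ratio $n^{\pm\Theta(1)}$, and only $O(1)$ blocks carry weight.
\end{itemize}

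The repair is your ``geometric profile $T_i\propto K^i$'' taken literally, without blocks: $p_i=1-\frac{1}{K}$, $v_i=K^{-i}$, $H=1$.
\begin{itemize}
\item $\Pr[\min_j X_j=v_i]=(1-\frac{1}{K})K^{-(n-i)}$, so every contribution equals $(1-\frac{1}{K})K^{-n}$ and $\mathbb{E}[\textsc{Opt}]=\bigl((1-\frac{1}{K})n+1\bigr)K^{-n}$.
\item Your forcing inequality reads $(1-\frac{1}{K})K^{1-n}\sum_i s_i\le c\,\mathbb{E}[\textsc{Opt}]$, i.e.\ $\sum_i s_i\le\frac{c}{K}\bigl(n+\frac{K}{K-1}\bigr)$.
\item With $K=4c$ this gives $\sum_i s_i\le\frac{n}{4}+\frac{1}{3}$. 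Hence the expected number of choices is at least $(1-\frac{1}{K})(n-\sum_i s_i)\ge\frac{9n}{16}-\frac{1}{4}$.
\item Taking $K=n$ instead gives a single instance forcing at least $n-2c-1$ choices for every $c$.
\end{itemize}
So your route, once completed, proves $\Omega(n)$, which is stronger than the theorem. Note also that this instance has $\ln\ln M=\Theta(\ln n)$.

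The paper proceeds differently. It uses continuous heavy-tailed variables whose log-scales shrink by a factor $n^2$ per index, and a high-probability lower bound $A_{k-1}$ on the prefix minimum. A union bound over the first $n/(4\ln n)$ indices then isolates one index skipped with probability at least $2\ln n/n$. That single index must carry a penalty of order $e^n/n^{2k+3}$ relative to $\textsc{Opt}$, which is superconstant only for $k\lesssim n/(2\ln n)$. That is where the paper's $\ln n$ is lost. Your linear charging over disjoint failure events, with all contributions balanced, never needs to isolate one index and so loses nothing.
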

\paragraph*{Proof intuition}
The distributions are chosen so that, with high probability, the values decrease over time, meaning that each early variable offers an important opportunity to improve the algorithm's current choice. On the other hand, if the algorithm rejects one of these improving values, it is forced, with small but non-negligible probability, to keep a value much larger than the offline minimum. Although this bad event occurs with low probability, its contribution to the expectation is large enough that any algorithm that does not accept many of the early variables has an unbounded competitive ratio.

\begin{proof}

Define the following 
variables $X_1,..,X_n$ as follows. For $1\leq k\leq n-1$
\[
F_{X_k}(x)=
\begin{cases}
1 - \dfrac{1}{x^{{n^{2k-2n}}}}, & 1\le x < e^{n^{2n}}, \\[1mm]
1, & x\ge e^{n^{2n}},
\end{cases}
\]
and for $X_n$:
\[
F_{X_n}(x)=
\begin{cases}
1 - \dfrac{1}{x^{1-{\sum_{k=1}^{n-1}n^{2k-2n}}}}, & 1\le x < e^{n^{2n}}, \\[1mm]
1, & x\ge e^{n^{2n}}.
\end{cases}
\]
First, we claim the following
\begin{claim}
\label{claim:expectation non i.i.d}
The expectation of the minimum of \(n\) independent variables is $n^{2n} +1$
\end{claim}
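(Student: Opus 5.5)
The plan is to compute $\mathbb{E}[\min_j X_j]$ through its tail, $\mathbb{E}[Y]=\int_0^\infty \Pr[Y>x]\,dx$ for a nonnegative $Y$, using independence to turn the tail of the minimum into a product of individual tails. The construction is designed exactly so that this product collapses.

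First, set $a_k=n^{2k-2n}$ for $1\le k\le n-1$ and $a_n=1-\sum_{k=1}^{n-1}a_k$. I will check that $a_n>0$. Since $a_k\le n^{-2}$ for every $k\le n-1$, we get $\sum_{k=1}^{n-1}a_k\le (n-1)/n^2<1$. Hence each $F_{X_k}$ is a genuine distribution: a truncated Pareto on $[1,T)$ with $T=e^{n^{2n}}$, and an atom at $T$ carrying the remaining mass $T^{-a_k}$. Also $\sum_{k=1}^n a_k=1$ by construction.

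Next, I compute the tail of the minimum. Every $X_k\ge 1$, so $\Pr[\min_j X_j>x]=1$ for $0\le x<1$. For $1\le x<T$, independence gives
\[
\Pr\bigl[\min_j X_j>x\bigr]=\prod_{k=1}^n \Pr[X_k>x]=\prod_{k=1}^n x^{-a_k}=x^{-\sum_k a_k}=\frac1x .
\]
For $x\ge T$, every $F_{X_k}(x)=1$, so the tail vanishes.

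Finally, integrating gives
\[
\mathbb{E}\bigl[\min_j X_j\bigr]=\int_0^1 1\,dx+\int_1^T \frac{dx}{x}=1+\ln T=1+n^{2n}.
\]

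There is no real obstacle here. The only points needing care are the positivity of $a_n$ and the correct handling of the atom at $T$. The tail formula $\Pr[X_k>x]=x^{-a_k}$ holds for $x<T$ and the tail is $0$ for $x\ge T$, so the atom contributes nothing extra to the integral.
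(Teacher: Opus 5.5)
Your proposal is correct and follows essentially the same route as the paper: use independence to get $\Pr[\min_j X_j>x]=\prod_k x^{-a_k}=1/x$ on $[1,e^{n^{2n}})$, then integrate the tail to obtain $1+n^{2n}$. Your explicit checks that $a_n>0$ (so $F_{X_n}$ is a valid distribution) and that the atom at $e^{n^{2n}}$ is already accounted for by the tail integral are small points of care that the paper leaves implicit.
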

\begin{proof}
See Appendix~\ref{prf: expectation non i.i.d}.
\end{proof}

We define $A_k \;=e^{\,n^{2n-2k-1}}$ and prove the following claim:
\begin{claim} \label{clm: X_k values}
For every \(1 \le k \le n-1\), Denote $Y_k$ as the minimum value for first $k$ variables. For every $k, $ $Y_k$ satisfies that
\[
Y_k\ge A_{k}= e^{n^{2n-2k-1}} 
\]
with probability of at least $1-\frac{2}{n}$ for large enough $n$.
\end{claim}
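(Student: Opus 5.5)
Since the $X_j$ are independent, $\Pr[Y_k < A_k] \le \sum_{j=1}^{k}\Pr[X_j < A_k]$ by a union bound. So the plan is to bound each $\Pr[X_j<A_k]$ separately and then sum. For $j\le k\le n-1$, the distribution of $X_j$ is a truncated Pareto with exponent $\alpha_j=n^{2j-2n}$. We first check that $A_k=e^{n^{2n-2k-1}}\le e^{n^{2n}}$, so $A_k$ lies in the non-truncated range. Then
\[
\Pr[X_j<A_k]=1-A_k^{-\alpha_j}=1-\exp\!\left(-n^{2j-2n}\cdot n^{2n-2k-1}\right)=1-\exp\!\left(-n^{2j-2k-1}\right)\le n^{2j-2k-1},
\]
using $1-e^{-t}\le t$.

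Summing over $j=1,\dots,k$ gives
\[
\Pr[Y_k<A_k]\le \sum_{j=1}^{k} n^{2j-2k-1}=\frac{1}{n}\sum_{i=0}^{k-1} n^{-2i}\le \frac{1}{n}\cdot\frac{1}{1-n^{-2}}.
\]
This is at most $\frac{2}{n}$ for every $n\ge 2$, so in particular for large enough $n$. Hence $Y_k\ge A_k$ with probability at least $1-\frac{2}{n}$, which is the claim.

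There is no real obstacle in this argument. The only point needing care is the exponent bookkeeping: the term $j=k$ dominates and contributes exactly $n^{-1}$, while earlier variables contribute geometrically smaller amounts. One should also confirm that no variable $X_j$ with $j\le k$ can have value below $1$, since each is supported on $[1,e^{n^{2n}}]$; this ensures the formula $1-x^{-\alpha_j}$ is valid throughout the relevant range. The claim concerns only $k\le n-1$, so the special variable $X_n$ never enters.
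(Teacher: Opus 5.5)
Your proof is correct and essentially the paper's argument. The paper uses independence to write the exact tail $\Pr[Y_k>x]=x^{-\sum_{i\le k}n^{2i-2n}}$ and then applies $1-e^{-a}\le a$ to the total exponent $\sum_{i\le k}n^{2i-2k-1}\le 2/n$; your union bound with $1-e^{-t}\le t$ applied termwise gives exactly the same sum (and, incidentally, does not need independence at all).
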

\begin{proof}
See appendix ~\ref{prf: X_k values}
\end{proof}

Next, for each $k$, define
$
Z_k = \min\{X_{k+1},\dots,X_n\}
$ 
which is the minimum of the remaining variables, and
$
C_k = \mathbb{E}\!\left[\min\!\left(A_{k-1},Z_k\right)\right]$, 
which is a lower bound on the algorithm's value if we do not choose $x_k$.
\begin{claim} \label{clm: lower_bound_pref}
For every $1\le k\le n-1 $ it holds that:
\[
C_k = \mathbb{E}\!\left[\min(A_{k-1}, Z_k)\right]\ge \frac{1}{3}\cdot n^{2n-2k-2}\cdot e^n.
\]

\begin{proof}
    \begin{eqnarray*}
  C_k=\mathbb{E}\left[\min\left(A_{k-1},Z_k\right)\right] &=& 
    1+\int_{1}^{A_{k-1}}\frac{1}{x^{1-\sum_{i=1}^{k}\frac{n^{2i}}{n^{2n}}}} dx\\
    &=& 1+ \frac{1}{\sum_{i=1}^{k}\frac{n^{2i}}{n^{2n}}}\cdot \left(A_{k-1}^{\sum_{i=1}^{k}\frac{n^{2i}}{n^{2n}}}-1\right)\\
    &\ge&  \frac{1}{\sum_{i=1}^{k}\frac{n^{2i}}{n^{2n}}}\cdot\left(e^{\frac{n^{2n}}{n^{2k-1}}{\sum_{i=1}^{k}\frac{n^{2i}}{n^{2n}}}}-1\right)\\
    &\geq & \frac{n^{2n}}{\sum_{i=1}^{k}n^{2i}}\cdot \left(e^{\frac{n^{2k}}{n^{2k-1}}}-1\right)\\
    &\geq & \frac{n^{2n}}{2n^{2k+2}}\cdot \left(e^n-1\right)\\
    &\geq & \frac{n^{2n}}{3n^{2k+2}}\cdot e^n = \frac{1}{3}\cdot n^{2n-2k-2}\cdot e^n \ ,
\end{eqnarray*}
where the second inequality holds because we decreased the exponent, and the next inequality holds because $n^{2k+2}$ dominates the sum.  

\end{proof}
\end{claim}
\begin{claim}\label{clm:choose_prefix}
Let $m= \frac{n}{4\ln n}$. Any constant-competitive algorithm must make $m$ choices among
$\{X_1,\dots,X_m\}$ with probability at least $\frac12$.
\end{claim}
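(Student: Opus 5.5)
We argue by contradiction: if a constant-competitive algorithm fails to select all of $X_1,\dots,X_m$ with probability more than $\frac12$, its expected cost is much larger than $\mathbb{E}[\min_j X_j]=n^{2n}+1$ (Claim~\ref{claim:expectation non i.i.d}). The approach is a union bound over the index at which the algorithm first skips, combined with the lower bounds of Claims~\ref{clm: X_k values} and~\ref{clm: lower_bound_pref}.

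\textbf{Decomposition and the good event.} Suppose that with probability $>\frac12$ the algorithm does not choose all of $X_1,\dots,X_m$. Then there is a first index $k\le m$ that it skips. Let $B_k$ be the event that the first skipped index is $k$, so $\sum_{k\le m}\Pr[B_k]>\frac12$. Let $G$ be the event that $Y_{k-1}\ge A_{k-1}$ for every $k\le m$. Claim~\ref{clm: X_k values} gives each such event probability at least $1-\frac2n$. A plain union bound over $m$ indices is too weak, so I will instead use the monotone structure: with high probability the sequence of values is decreasing. I will restrict attention to the prefix events the argument actually needs, or strengthen the claim slightly to get $\Pr[G]\ge 1-o(1)$; see the last paragraph.

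\textbf{Lower bound on the cost given a skip.} On $B_k$, every value chosen before step $k$ is at least $Y_{k-1}$, since the earlier selections are among $X_1,\dots,X_{k-1}$. Because the algorithm skips $X_k$, its final cost is at least $\min(Y_{k-1},Z_k)$. On $G$ this is at least $\min(A_{k-1},Z_k)$.

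The event $B_k$ depends only on $X_1,\dots,X_k$ and the algorithm's coins, while $Z_k$ is independent of these. Hence
\[
\mathbb{E}[\mathrm{ALG}\,\mathbf{1}_{B_k\cap G}]\ \ge\ \Pr[B_k\cap G]\,C_k\ \ge\ \Pr[B_k\cap G]\cdot\tfrac13 n^{2n-2k-2}e^n,
\]
using Claim~\ref{clm: lower_bound_pref}. For $k\le m=\frac{n}{4\ln n}$ we have $n^{-2k-2}\ge n^{-2}e^{-n/2}$. Therefore $C_k\ge \frac13 n^{2n-2}e^{n/2}$, which exceeds $c(n^{2n}+1)$ for any constant $c$ once $n$ is large. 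Summing over $k$ gives
\[
\mathbb{E}[\mathrm{ALG}]\ \ge\ \Pr\Big[\textstyle\bigcup_k B_k\cap G\Big]\cdot \tfrac13 n^{2n-2}e^{n/2}\ \ge\ \Big(\tfrac12-\Pr[\bar G]\Big)\cdot\tfrac13 n^{2n-2}e^{n/2}.
\]
If $\Pr[\bar G]\le \frac14$, this is $\omega(\mathbb{E}[\mathrm{OPT}])$, contradicting constant competitiveness.

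\textbf{Main obstacle: controlling $\Pr[\bar G]$.} Claim~\ref{clm: X_k values} gives failure probability $\frac2n$ per index, and $m$ indices would give a total of roughly $\frac{1}{2\ln n}$, which is already below $\frac14$ for large $n$. So the plain union bound in fact suffices: $m\cdot\frac2n=\frac{1}{2\ln n}\to 0$. If the claim's constant were worse, I would instead use the fact that $\Pr[X_i<A_{k-1}]$ for $i<k$ is about $n^{2i-2n}\cdot n^{2n-2k+1}=n^{2i-2k+1}$, which is tiny for $i\le k-1$. I would bound $\Pr[Y_{k-1}<A_{k-1}]$ directly by $\sum_{i<k}n^{2i-2k+1}=O(1/n)$ and apply the union bound the same way.
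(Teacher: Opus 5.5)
Your argument is correct in substance, and it differs from the paper only in how the skip events are counted. The paper averages. From the union bound over the events ``$X_j$ is skipped'' it extracts a single index $k$ that is skipped with probability at least $\frac{1}{2m}=\frac{2\ln n}{n}$. It then intersects this with the one event $S_k=\{Y_{k-1}\ge A_{k-1}\}$, whose failure probability is at most $\frac2n$. That leaves probability only about $\frac1n$, which the $e^{n/2}$ slack in $C_k/\mathbb{E}[\mathrm{OPT}]$ absorbs.

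You instead split by the first skipped index into disjoint events $B_k$ and sum, using $C_k\ge\frac13 n^{2n-2}e^{n/2}$ uniformly for $k\le m$. This keeps probability $\frac12-o(1)$ rather than $\frac1n$. The price is a union bound over $m$ indices for the good event, which costs only $m\cdot\frac2n=\frac{1}{2\ln n}$. Your earlier remark that the plain union bound is ``too weak'' is therefore mistaken and should be deleted.

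One step is not justified as written. You apply independence to $B_k\cap G$, but $G$ is not determined by $X_1,\dots,X_k$ and the coins. It contains the events $\{Y_{j-1}\ge A_{j-1}\}$ for all $j\le m$, which involve $X_{k+1},\dots,X_{m-1}$, and those variables also enter $Z_k$. So the inequality $\mathbb{E}[\mathrm{ALG}\,\mathbf{1}_{B_k\cap G}]\ge \Pr[B_k\cap G]\,C_k$ does not follow from ``$Z_k$ is independent of these.''

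The inequality is nevertheless true. Conditionally on $X_1,\dots,X_k$, the remaining part of $G$ and $\min(A_{k-1},Z_k)$ are both increasing in $X_{k+1},\dots,X_n$, so Harris' inequality applies. The cleanest repair, though, is to use $S_k$ instead of $G$ in the $k$-th term:
\begin{itemize}
\item $B_k\cap S_k$ depends only on $X_1,\dots,X_k$ and the coins, so genuine independence gives $\mathbb{E}[\mathrm{ALG}\,\mathbf{1}_{B_k\cap S_k}]\ge C_k\Pr[B_k\cap S_k]$.
\item Since $G\subseteq S_k$, this is at least $C_k\Pr[B_k\cap G]$.
\item The events $B_k\cap S_k$ are still pairwise disjoint, so the sum over $k$ of the left-hand sides is at most $\mathbb{E}[\mathrm{ALG}]$.
\end{itemize}
With this change the rest of your chain goes through unchanged.
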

\begin{proof}

Suppose there exists a constant competitive algorithm that, with probability of at least $\frac{1}{2}$, makes fewer than $m$ choices among $\{X_1,\dots,X_m\}$.  For each $j\in\{1,\dots,m\}$, let $G_j$ be the event that the algorithm does \emph{not} choose $X_j$.  Let $E$ be the event that the algorithm makes fewer than $m$ choices among $\{X_1,\dots,X_{m}\}$.  By assumption, $\Pr(E)\ge\frac12$.  On the event $E$ the algorithm must skip at least one index in $\{1,\dots,m\}$, hence by using the union bound we get 
\[
\sum_{j=1}^m \Pr(G_j)\;\ge\;\Pr\!\left(\bigcup_{j=1}^m G_j\right)\ge \Pr(E)\ge \frac12,
\]

and therefore there exists $k\in\{1,\dots,m\}$ such that
\[
\Pr(G_k)\;\ge\;\frac{1}{2m}=\frac{2\ln n}{n}.
\]

Hence, there exists a $k$ such that the algorithm didn't choose a variable with a probability of at least $\frac{2\ln n}{ n}$. Denote this variable as $X_{k}$. When the algorithm skipped the value of $X_k$, its decision was based only on the first $k-1$ variables. Hence,  we can assume it is holding a previously selected value—specifically, one of \( \{X_1, \dots, X_{k-1}\} \). We can assume using claim \ref{clm: X_k values} that with probability of at least 
\[\Pr[\min\{X_1,\dots,X_{k-1}\}\ge A_{k-1}] =\Pr [Y_{k-1}\ge A_{k-1}] \ge 1-\frac{2}{n},\] 
that the values of the variables \( X_1, \dots, X_{k-1} \) are at least \( A_{k-1} \); it follows that the algorithm's current value is at least \( A_{k-1} \). Consequently, the expected performance of the algorithm is lower bounded by the minimum of its current value and the minimum of the remaining unseen variables. Since it rejected \( X_{k} \), we can assume it continues with the same value when it observes $X_{k+1}$. Therefore, its expected cost is at least $C_k=\mathbb{E}\!\left[\min(A_{k-1}, Z_k)\right]$.
Hence by claim \ref{clm: lower_bound_pref} its expected cost is at least:
\[
C_k\ge\frac{1}{3}\cdot n^{2n-2k-2}\cdot e^n \ .
\]

Denote the event that $\min\{X_1,\dots,X_{k-1}\}\ge A_{k-1}$ as $S_k$. We can bound $\Pr(S_k \cap G_k)$ by using the fact that
\[
\Pr(S_k)= \Pr(S_k\cap G_k)+\Pr (S_k\cap G_k^c)\le \Pr(S_k\cap G_k)+\Pr(G_k^c).
\]
Hence,
\[
\Pr(S_k\cap G_k)\ge \Pr(S_k)-\Pr(G_k^c)\ge \left(1-\frac{2}{n}\right)-\left(1-\frac{2\ln n}{n}\right) =\frac{2\ln n}{n}-\frac{2}{n} \geq \frac{1}{n}.
\]
Where the last inequality holds for large enough $n$.
Therefore, we can bound the expectation of such an algorithm by at least
\[
\mathbb{E}[ALG]\ge \Pr(S_k\cap G_k)\cdot C_k\ge \frac{1}{n}\cdot \frac{1}{3}n^{2n-2k-2}\cdot e^n=\frac{1}{3}\cdot n^{2n-2k-3}\cdot e^n.
\]
Since, $\mathbb{E}[OPT]= 1+n^{2n}$ then the competitive ratio of the algorithm is
\[
\frac{\mathbb{E}[ALG]}{\mathbb{E}[OPT]}\ge \frac{\frac{1}{3}\cdot n^{2n-2k-3}\cdot e^n}{2\cdot n^{2n}} =\frac{1}{6}\cdot \frac{e^n}{n^{2k+3}}\ge \frac{1}{6} \frac{e^n}{n^{\frac{n}{2\ln n}+3}}= \frac{1}{6}\cdot \frac{e^n}{e^{n/2+3\ln n}}\ge e^{n/3}.
\]
Note that for large enough $n$ we have $e^{n/3}\ge n^3$. Hence, if an algorithm skips the variable $X_k$ with probability of more than $\frac{2\ln n}{n}$ then its competitive ratio is at least  $e^{n/3}$
which is not constant competitive. Hence, 
any algorithm must choose all of  $\{X_1,\dots,X_m\}$ with probability of at least $\frac{1}{2}$.
\end{proof}

To complete the proof, we note that in total each constant competitive algorithm for large enough $n$ makes at least
\[
\mathbb{E}[Choices]\ge \frac{1}{2}\cdot \frac{n}{4\ln n}=\frac{n}{8\ln n}
\]
choices in expectation, which is $\Omega(\frac{n}{\ln n})$ as required.  
\end{proof}
\section{Constant competitive algorithm with a small expected number of choices}
\label{sec:algorithm}
In this section, we present our constant-competitive algorithm and analyze both its competitive ratio and the expected number of choices it makes. The algorithm is threshold based. It establishes a threshold and only accepts values below it or the variable that has the minimum expectation. To describe the algorithm we assume that the variables have a continuous distribution.  We now present the algorithm: denote the first variable with the lowest expectation as $X_{i_0}$ and its index as $i_0$. 
\begin{center}
\fbox{\parbox{0.93\textwidth}{
\textbf{Algorithm 1}
\begin{enumerate}
    \item Find a threshold $t$ (e.g., by binary search) such that $\sum_i p_i = \min\{\ln M+1,n$\} where $p_i = \Pr(x_i \leq t)$.
    \item Accept the first item with a value at most t, \emph{or} $X_{i_0}$
    \item Thereafter, switch only if a newly observed value is strictly smaller than the value currently held and less than $t$.
\end{enumerate}
}}
\end{center}

We consider the random order model (the prophet secretary problem).
\begin{theorem}
For arbitrary variables, in random order, the above algorithm is constant-competitive and makes  $O(\min\{\ln\ln M,\ln n\})$ choices in expectation.
\end{theorem}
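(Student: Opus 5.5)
The plan is to prove the competitive ratio and the bound on the number of choices separately. The ratio bound holds for every arrival order; the random order is used only in the counting. Write $L=\min\{\ln M+1,n\}$ and let $t$ be the threshold with $\sum_i p_i=L$. Such a $t$ exists by continuity and monotonicity of $t\mapsto\sum_i\Pr(X_i\le t)$, which runs from $0$ to $n$ (when $L=n$ we may take $t=\infty$).

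\textbf{Structure of the final holding.} I would first establish the following fact. Let $B$ be the event that some $X_j\le t$. On $B$, the algorithm ends holding exactly $\min_j X_j$. To see this, note the algorithm is always holding something once it has seen either the first value $\le t$ or $X_{i_0}$. From then on it switches to every newly observed value that is $\le t$ and below its current holding. So its final value is at most every value $\le t$, and in particular equals the global minimum, which is $\le t$. On $B^c$, every variable exceeds $t$. Hence the algorithm never accepts anything except $X_{i_0}$, and it pays exactly $X_{i_0}$.

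\textbf{Competitive ratio.} From the structural fact,
\[
\mathbb{E}[\mathrm{ALG}]\le \mathbb{E}[\mathrm{OPT}]+\mathbb{E}\bigl[X_{i_0}\mathbbm{1}\{X_j>t\ \forall j\}\bigr].
\]
By independence, the second term is at most
\[
\mathbb{E}[X_{i_0}]\prod_{j\ne i_0}(1-p_j)\le \mathbb{E}[X_{i_0}]\,e^{-(L-p_{i_0})}\le \mathbb{E}[X_{i_0}]\,e^{-(L-1)}.
\]
We then split into two cases.
\begin{itemize}
\item If $L=\ln M+1$, this equals $\mathbb{E}[X_{i_0}]/M=\mathbb{E}[\mathrm{OPT}]$, so the algorithm is $2$-competitive.
\item If $L=n$, then $t$ may be taken so that all $p_i=1$. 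Then $B$ holds almost surely, and $\mathrm{ALG}=\mathrm{OPT}$.
\end{itemize}
A little care is needed when $M<e$, so that $\ln M+1<2$, and with atoms. Both are handled by the continuity assumption of the preliminaries.

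\textbf{Expected number of choices.} The algorithm makes at most one choice for $X_{i_0}$, plus one choice for each value $\le t$ that is smaller than all previously seen values $\le t$. Let $S=\{j: X_j\le t\}$ and $K=|S|$. In the prophet secretary model the arrival permutation is uniform and independent of the values. So, conditioned on the values (and hence on $S$), the relative order of the $K$ distinct values in $S$ is a uniform permutation. The expected number of prefix minima of a uniform permutation of $K$ elements is $H_K\le 1+\ln(1+K)$. Hence
\[
\mathbb{E}[\#\text{choices}]\le 1+\mathbb{E}[1+\ln(1+K)]\le 2+\ln(1+\mathbb{E}K)=2+\ln(1+L),
\]
where the middle step is Jensen's inequality for the concave function $\ln$, and $\mathbb{E}K=\sum_i p_i=L$. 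This gives $O(\ln\ln M)$ when $L=\ln M+1$, and also at most $2+H_n=O(\ln n)$ since $K\le n$. Together these give $O(\min\{\ln\ln M,\ln n\})$.

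The step I expect to need the most care is the structural fact. The interplay between accepting $X_{i_0}$ (possibly with value above $t$) and later switches must be argued so that no choice is missed on $B$ and no extra cost is incurred on $B^c$. The conditioning argument that decouples the random order from the values is the other point to state precisely. Everything else is the short computation above.
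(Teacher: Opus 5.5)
Your proposal is correct and takes essentially the same approach as the paper. Both use the failure-event decomposition $\mathbb{E}[\mathrm{ALG}]\le \mathbb{E}[\mathrm{OPT}]+\mathbb{E}[X_{i_0}]\prod_{j\ne i_0}(1-p_j)\le 2\,\mathbb{E}[\mathrm{OPT}]$, and both bound the number of choices by one extra choice plus the $H_K$ expected prefix minima among below-threshold values, followed by Jensen's inequality; your version is somewhat more careful, since you condition on the full value vector rather than only on the number of below-threshold samples and you justify explicitly why the final holding equals the global minimum whenever some value is at most $t$.
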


We next analyze the algorithm's performance. The analysis proceeds in two parts: we first bound its competitive ratio and then the expected number of choices it makes, where the expectation is taken over the randomness of the input. For ease of exposition we assume $\mathbb{E}[min\{X_1,...X_n\}]=1$ which implies that $\mathbb{E}[X_{i_0}]=M$.

\subsection{Competitive ratio and number of choices analysis}
We analyze the algorithm separately for the competitive ratio and for the expected number of choices.

\subsubsection{Competitive ratio analysis}
First, observe that if $\ln M+1\geq n$ then $p_i = 1$ for every $i$. Hence, the algorithm always achieves the minimum and is 1-competitive. 
If $\ln M +1<n$ then 
the only situation in which the algorithm may fail to achieve the minimum is when all samples except for the sample $X_{i_0}$ are above $t$. Since $\ln M +1 <n $ then 
$\sum_{i=1}^np_i =\ln M +1$ which means that $\sum_{i=1}^{n}p_i-p_{i_0} \geq \ln M$, and thus the probability that all variables except $X_{i_0}$ are above $t$ is:
\[
\Pr\!\Bigl(\forall i \in [n]\setminus\{i_0\} : X_i > t\Bigr)
= \prod_{i \neq i_0} (1-p_i) 
\;\le\; \Bigl(1-\tfrac{\ln M}{n-1}\Bigr)^{n-1} 
\;\le\; e^{-\ln M} 
= \tfrac{1}{M},
\]
where the first inequality is the mean inequality and the second one is $(1-\alpha)^n\le e^{-n\alpha}$.
In this case, the algorithm's expected cost is $M$ since it always chooses the variable $X_{i_0}$.

If there is a sample below $t$, then the algorithm achieves the minimum and is therefore $1$-competitive.
Combining these cases, the expected performance of the algorithm is
\begin{eqnarray*}
\mathbb{E}[\text{ALG}] \; &\le\; & 1 \cdot E[\min \{X_1,...,X_n\}]|\min\{X_1,...,X_n\}\leq t) \;+\; \tfrac{1}{M}\cdot E[X_{i_0}] \\
 & \:\leq\: &
E[\min\{X_1,...,X_n\}] + \frac{1}{M} \cdot M \;=\; 2.
\end{eqnarray*}
Hence, the competitive ratio of the algorithm is at most $2$.

\subsubsection{Number of Choices Analysis}
Denote $Z_1 = \text{number of samples below t,  }Z_2=\text{number of choices overall}$.
Then we have $\mathbb{E}[Z_1]=\min\{\ln M+1,n\}$.
As the number of expected new minima in a random permutation of length $n$ is $\sum_{i=1}^n1/i =H_n \le \ln (n+1) +1$. We get that $\mathbb{E}[Z_2|Z_1]\le \ln (1+Z_1) +2$ since, there is 1 choice that is being made regardless of the number of samples.
Then by the law of total expectation we have that 
\[
\mathbb{E}[Z_2]=\mathbb{E}[\mathbb{E}[Z_2|Z_1]]\leq\mathbb{E}[\ln (1+Z_1)+2] \ .
\]
By Jensen's inequality, since $\ln $ is a concave function, we get that $\mathbb{E}[\ln (Z_1+1)+2]\le \ln (\mathbb{E}[Z_1+1])+2$, and hence
\[
\mathbb{E}[Z_2]\le\mathbb{E}[\ln (Z_1+1)+2] \le \ln(\mathbb{E}[Z_1+1])+2=O\left(\min \{\ln \ln M,\ln n\}\right).
\]
As required.

\section{Lower bound on the expected number of choices}
\label{sec:lowerbound_expected}
We next establish a nearly tight lower bound, showing that our upper bound on the expected number of choices is optimal up to lower-order terms. The lower bound is actually proved even for the i.i.d. and hence obviously holds for the prophet secretary. The proof relies on constructing a hard i.i.d.\ instance that forces any algorithm to make sufficiently many selections (even in expectation) to remain constant competitive.

\begin{theorem}\label{thm:lower bound}
Any constant competitive algorithm for arbitrary i.i.d. variables requires $\Omega\left(\min\{\frac{\ln\ n}{\ln\ln n}, \frac{\ln \ln M}{\ln \ln \ln M}\}\right)$ choices in expectation.
\end{theorem}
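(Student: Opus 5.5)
We build a single hard i.i.d.\ distribution, split the arrival sequence into $L$ consecutive intervals, and show that any $c$-competitive algorithm must make a choice inside every interval with probability bounded away from zero. Summing over the intervals then gives $\Omega(L)$ choices in expectation. The target is $L=\Theta(\min\{\ln n/\ln\ln n,\ \ln\ln M/\ln\ln\ln M\})$. The intervals $I_1,\dots,I_L$ will have geometrically growing lengths $b^{j}$ with $b=\Theta(L)$, which is why $L\approx \ln n/\ln \ln n$: we need $b^L\le n$ with $b\approx L$.

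\textbf{The distribution.} Take a discrete distribution with super-exponentially decreasing values $v_0\gg v_1\gg\dots\gg v_L$ (perturbed later to be continuous). The probabilities $q_j=\Pr[X=v_j]$ are set so that the expected number of $v_j$-values in $I_1\cup\dots\cup I_{j}$ is about $1$. Thus value $v_j$ typically first appears in interval $I_j$, and with constant probability it appears there and in no earlier interval. The value $v_L$ is tuned so that $\mathbb{E}[\min]\approx v_L$ up to constants, which determines $M$. Since the values must drop by factors like $b^{\Theta(j)}$ per level and $q_j$ decays like $b^{-j}$, $\mathbb{E}[X]\approx v_0$ while $\mathbb{E}[\min]\approx v_L$. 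This yields $\ln\ln M\approx L\ln L$, i.e.\ $L\approx \ln\ln M/\ln\ln\ln M$. Taking $L$ to be the smaller of the two parameters handles both terms of the minimum.

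\textbf{Forcing a choice in each interval.} Condition on the event $B_j$ that the minimum value seen up to the end of interval $I_{j-1}$ is $v_{j-1}$ and some $v_{j-1}$ appears there, so that the best observed so far is $v_{j-1}$. Suppose now that during $I_j$ a $v_j$ appears and the algorithm declines to choose it (it skips $v_j$). Its held value is then at least $v_{j-1}$ until something better arrives. The probability that no value below $v_{j}$ ever appears in the remaining sequence, and that no further $v_j$ appears, is at least some probability $\rho_j$. Using the super-exponential gaps, we show $\rho_j v_{j-1}\ge C\cdot c\cdot v_L/\Pr[B_j]$ for a large constant $C$. The argument parallels Claim~\ref{clm:choose_prefix}: skipping with probability at least $1/2$ conditional on $B_j$ would contribute more than $c\,\mathbb{E}[\mathrm{OPT}]$ to the cost. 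Hence on $B_j\cap\{v_j\text{ appears in }I_j\}$ the algorithm chooses within $I_j$ with probability at least $1/2$. Since $\Pr[B_j\cap\{v_j\in I_j\}]$ is a constant, each interval contributes $\Omega(1)$ expected choices. Linearity of expectation then gives $\Omega(L)$.

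\textbf{Main obstacle.} The delicate step is calibrating $(v_j,q_j,|I_j|)$ so that three requirements hold simultaneously. First, the events $B_j$ have constant probability for every $j$, which requires the levels not to appear too early. Second, the penalty $\rho_j v_{j-1}$ dominates $c\cdot\mathbb{E}[\min]$ even though $\rho_j$ is small; this requires the tail probability of no later improvement to decay only polynomially in $b$ while $v_{j-1}/v_L$ grows faster. Third, the resulting $M$ and $n$ match the claimed $L$. The first thing I would try is $q_j=b^{-j}$, $|I_j|=b^{j}$ with $b=\Theta(L)$, and values $v_j=b^{-j\cdot K}$ with $K$ large, checking that this gives $\rho_j\ge b^{-O(1)}$. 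If this fails I would move to doubly exponential values $v_j=\exp(-b^{j})$, which is what produces the $\ln\ln M$ scale.
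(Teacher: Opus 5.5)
There is a genuine gap. It sits in the step you treat as a calibration issue, the inequality $\rho_j v_{j-1}\ge C c\,v_L/\Pr[B_j]$ with $v_L$ standing in for $\mathbb{E}[\mathrm{OPT}]$. For your parameters no choice of values can make it hold.

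Let $P_j$ be the minimum of $I_1\cup\dots\cup I_j$ and $Z$ the minimum of the variables after $I_j$. These are independent, and $\rho_j\le\Pr[Z\ge v_{j-1}]$. Hence
\[
\mathbb{E}[\min_i X_i]=\mathbb{E}[\min(P_j,Z)]\;\ge\;\Pr[P_j\ge v_{j-1}]\cdot\mathbb{E}[\min(v_{j-1},Z)]\;\ge\;\Pr[P_j\ge v_{j-1}]\cdot\rho_j\,v_{j-1}.
\]
Under your calibration, the expected number of values below $v_{j-1}$ in $I_1\cup\dots\cup I_j$ is $1+O(1/b)$. So $\Pr[P_j\ge v_{j-1}]\approx e^{-1}$, and skipping $I_j$ is genuinely cheap in this instance. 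An algorithm holding $v_{j-1}$ that skips all of $I_j$ and then takes every improvement has conditional expected cost $\mathbb{E}[\min(v_{j-1},Z)]$, which is at most about $e\cdot\mathbb{E}[\mathrm{OPT}]$.

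For any $c$ above roughly $e$ you therefore get no contradiction, however large the gaps $v_{j-1}/v_L$ are. Moving to doubly exponential values does not help, because the values drop out of the comparison above. The underlying error is the claim $\mathbb{E}[\mathrm{OPT}]\approx v_L$. The prophet also pays $v_{j-1}$ whenever nothing below $v_{j-1}$ appears anywhere, and that event has probability within a constant factor of $\rho_j$. For example, if the intervals make up the whole sequence, about one $v_L$ is expected in total, so $\mathbb{E}[\mathrm{OPT}]=\Omega(v_{L-1})$.

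The missing idea is that $I_j$ must be \emph{almost certain}, not merely likely with constant probability, to produce a value below the level $\ell$ the algorithm holds. At the same time, the prefix before $I_j$ must still, with high probability, contain nothing below $\ell$. The amplification $1/\Pr[P_j\ge\ell]$ has to beat $c/\Pr[\text{skip}]$ times the whole of $\mathbb{E}[\mathrm{OPT}]$, not just one level's share of it. This forces consecutive prefix lengths to grow by a non-constant factor, and that growth factor is what limits the number of intervals.

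The paper does this as follows.
\begin{itemize}
\item It uses $\Pr[X>x]=x^{-1/n}$ on $[1,\beta)$. Then $\Pr[\min>x]=1/x$, and $\mathbb{E}[\mathrm{OPT}]=\ln\beta+1$ is spread evenly over scales.
\item The prefixes are $n_k=n(\ln\ln\beta)^{2k}/\ln\beta$. At level $A_{k-1}=e^{\ln\beta/(\ln\ln\beta)^{2k-1}}$, the quantity $-\ln\Pr[Y\ge A_{k-1}]$ equals $1/\ln\ln\beta$ when $Y$ is the minimum of the first $n_{k-1}$ variables, and $\ln\ln\beta$ when $Y$ is the minimum of the first $n_k$. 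That is an amplification of $\ln\beta$.
\item This yields $\mathbb{E}[\min(A_{k-1},Z_k)]\ge \ln^2\beta/(2(\ln\ln\beta)^{2k})$.
\item This bound still exceeds $c\,\mathbb{E}[\mathrm{OPT}]$ after multiplying by a skip probability of $1/\ln\ln\beta$, as long as $(\ln\ln\beta)^{2k+1}\le \ln\beta/\ln\ln\beta$. That condition caps the number of intervals at $T\approx\ln\ln\beta/(2\ln\ln\ln\beta)$.
\item Finally, $M=\Theta(\beta/\ln\beta)$ converts this to $\ln\ln M/\ln\ln\ln M$, and $\beta=e^n$ gives the $\ln n/\ln\ln n$ term.
\end{itemize}
Your averaging and summing over intervals would go through once a penalty of this strength is in place. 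With the parameters you propose, however, the per-interval penalty is only a constant factor.
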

\begin{proof}
We will show a lower bound of $\Omega\left(\frac{\ln \ln \beta}{\ln \ln \ln \beta}\right)$ on the number of choices for any parameter $\beta\leq e^n$ where $M=\Theta\left( \frac{\beta}{\ln \beta}\right)$. This yields the required bound for $M =O\left(\frac{e^n}{n}\right)$. For $M=\Omega\left(\frac{e^n}{n}\right)$ we use $\beta=e^n$ and get a lower bound of $\Omega\left(\frac{\ln n}{\ln \ln n}\right)$.

\paragraph*{Proof overview}The proof proceeds in four steps.
\begin{itemize}
  \item We construct a heavy-tailed i.i.d.\ distribution and show that
  $M=\Theta(\beta/\ln\beta)$.
  \item We partition the sequence into intervals whose sizes grow
  doubly logarithmically.
  \item We show that skipping any such interval significantly increases the algorithm's expected cost.
  \item We conclude that any constant-competitive algorithm must make at
  least one choice in each interval with very high probability.
\end{itemize}

First, we define a random variable \(X\) that has the following cumulative distribution function (CDF):
\[
F(x)=
\begin{cases}
1 - \dfrac{1}{x^{1/n}}, & 1\le x < \beta, \\[1mm]
1, & x\ge \beta,
\end{cases}
\]
where $9 \le \beta\leq e^n$.
The probability density function (PDF) is
\[
f(x) = \frac{d}{dx}F(x) = \frac{1}{n} x^{-1 - \frac{1}{n}}\,.
\]

There is also a point mass at \( x = \beta \):
\[
P(X =\beta) = 1 - \lim_{x \to (\beta)^-} F(x) = \frac{1}{\beta^{1/n}}.
\]

\begin{claim}
\label{clm: expectation i.i.d}
The expectation of a single variable \( X \)  is  between $\frac{\beta}{4}$ and $2\beta$.
\end{claim}
\begin{proof}
\[
\mathbb{E}[X] = \int_{1}^{\beta} x\,f(x)\,dx +\beta\cdot \frac{1}{\beta^{1/n}} =\frac{1}{n} \int_1^{\beta} x^{-\frac{1}{n}}\,dx + \beta^{1-\frac{1}{n}}.
\]
We compute the integral:
\[
\int_1^{\beta} x^{-\frac{1}{n}}\,dx = \left[ \frac{x^{1 - \frac{1}{n}}}{1 - \frac{1}{n}} \right]_1^{\beta}
= \frac{\beta^{1 - \frac{1}{n}} - 1}{1 - \frac{1}{n}}\ .
\]


Hence, the expectation becomes:
\[
\mathbb{E}[X] = \frac{1}{n} \cdot \frac{\beta^{1-\frac{1}{n}} - 1}{1 - \frac{1}{n}} + \beta^{1-\frac{1}{n}}
= \frac{1}{n} \cdot \frac{\beta^{1-\frac{1}{n}} - 1}{\frac{n - 1}{n}} + \beta^{1 - \frac{1}{n}}
= \frac{\beta^{1 - \frac{1}{n}} - 1}{n - 1} + \beta^{1-\frac{1}{n}}=\frac{n\beta^{1-\frac{1}{n}}-1}{n-1}.
\]
Hence, 
\[
\mathbb{E}[X] =\frac{n\beta^{1-\frac{1}{n}}-1}{n-1}\leq 2\beta,
\]

and  
\[
\mathbb{E}[X]=\frac{n\beta^{1-\frac{1}{n}}-1}{n-1}\ge \frac{\beta}{e}-1\ge \frac{\beta}{4},
\]
where $9\le \beta \le e^n$,
ergo
$\frac{\beta}{4}\le\mathbb{E}[X]\le2\beta$ as required.
\end{proof}

\begin{claim}
\label{clm: expectation min i.i.d}
The expectation of the minimum of \(n\) independent variables is $\ln \beta + 1$.
\end{claim}
\begin{proof}
See appendix ~\ref{prf:expectation min i.i.d}
\end{proof}
From the above two claims it follows that  
$
M=\frac{\mathbb{E}[X_i]}{\mathbb{E}[\min\{X_1,...,X_n\}]}=\Theta\left( \frac{\beta}{\ln \beta}\right)\ .
$

Next define for $1 \le k \le \frac{\ln\ln  \beta}{2\ln \ln \ln  \beta}-1$ the prefix size
$ n_k=\frac{n(\ln\ln \beta)^{2k}}{\ln \beta}$ .

We now prove the following:
\begin{claim} \label{clm:Y_k}
The minimum value encountered $Y_k$ in the first $n_k$ variables for $1\leq k\leq \frac{\ln\ln  \beta}{2\ln \ln \ln  \beta}-1$ satisfies $Y_k \ge e^{\frac{\ln \beta}{(\ln \ln \beta)^{2k+1}}}$  with probability of more than $1-\frac{1}{\ln \ln \beta}$.
\end{claim}
\begin{proof}

Denote $ A_k=e^{\frac{\ln \beta}{(\ln \ln \beta)^{2k+1}}} $. The CDF of $Y_{k}$ is 
$$F_{Y_k}(x) =1-\frac{1}{x^{\frac{n(\ln \ln \beta)^{2k}}{n\ln \beta}}}= 1-\frac{1}{x^{\frac{(\ln  \ln \beta)^{2k}}{\ln \beta}}} \ .$$
So the probability that $Y_{k}$ is bigger than $A_k$  is
$
\Pr[A_k \le Y_{k}] = 1-F_{Y_k}(A_k).
$
Bounding $F_{Y_k}(A_k)$:
\[
F_{Y_k}(A_k) = 1-\frac{1}{A_k^{\frac{(\ln  \ln \beta)^{2k}}{\ln \beta}}} = 1-\frac{1}{e^{\frac{\ln \beta}{(\ln \ln \beta)^{2k+1}}\cdot{\frac{(\ln  \ln \beta)^{2k}}{\ln \beta}}}} = 1- e^{-\frac{1}{\ln \ln \beta}} \leq 1-1+\frac{1}{\ln \ln \beta} = \frac{1}{\ln \ln \beta}\ .
\]
where the inequality implied by $e^{-a} \ge 1 - a$.
Hence,
$
\Pr[A_k \le Y_{k}] = 1-F_{Y_k}(A_k)\ge 1-\frac{1}{\ln \ln \beta} \ .
$
\end{proof}
Next, for all $ 1 \le k \le \frac{\ln\ln  \beta}{2\ln \ln \ln  \beta}-1$, define
$ Z_k = \min\{X_{n_k+1},\dots,X_n\},$
which is the expected minimum of the remaining variables, and 
$C_k = \mathbb{E}\!\left[\min\!\left(A_{k-1},Z_k\right)\right]$
which is a lower bound on the value of the algorithm if it skips all variables in indices between $n_{k-1}$ to $n_k$.

\begin{claim}\label{clm:Ck}
For every $1\le k\le T$, we have
$
C_k \;\ge\; \frac{\ln^2\beta}{2(\ln\ln\beta)^{2k}}.
$
\end{claim}
\begin{proof}
    \begin{eqnarray*}
C_k 
& = & \int_0^1 1\,dx \;+\; \int_1^{e^{\frac{\ln \beta}{(\ln \ln \beta)^{2k-1}}}}
  \frac{1}{x^{\frac{n-\frac{n(\ln \ln \beta)^{2k}}{\ln \beta}}{n}}}\,dx \\
& = & \int_0^1 1\,dx \;+\; \int_1^{e^{\frac{\ln \beta}{(\ln \ln \beta)^{2k-1}}}}
  \frac{1}{x^{1 - \frac{(\ln \ln \beta)^{2k}}{\ln \beta}}}\,dx \\
& = & 1 \;+\; \int_1^{e^{\frac{\ln \beta}{(\ln \ln \beta)^{2k-1}}}}
   x^{-1+\frac{(\ln \ln \beta)^{2k}}{\ln \beta}}\,dx \\
& = & 1 \;+\; \left[\frac{x^{\frac{(\ln \ln \beta)^{2k}}{\ln \beta}}}{\tfrac{(\ln \ln \beta)^{2k}}{\ln \beta}}\right]_{x=1}^{x=e^{\frac{\ln \beta}{(\ln \ln \beta)^{2k-1}}}} \\
& = & 1 \;+\; \frac{\ln \beta}{(\ln \ln \beta)^{2k}}\left(e^{\tfrac{\ln \beta}{(\ln \ln \beta)^{2k-1}}\cdot \tfrac{(\ln \ln \beta)^{2k}}{\ln \beta}}- 1\right)\\
& = & 1 + \frac{\ln \beta}{(\ln \ln \beta)^{2k}}\,\left(e^{\ln \ln \beta} - 1\right).
\end{eqnarray*}

Hence,
\begin{equation*}
C_k 
= 1 + \frac{\ln \beta}{(\ln \ln \beta)^{2k}}\,\left(e^{\ln \ln \beta}-1\right) 
= 1 + \frac{\ln \beta}{(\ln \ln \beta)^{2k}}\,\left(\ln \beta - 1\right)
\geq \frac{\ln^2\beta}{2(\ln\ln \beta)^{2k}},
\end{equation*}
for all $\beta \ge 5$.
\end{proof}
Now we are ready to complete our proof (with high probability):
\begin{claim} \label{clm:num of choices}
Let $T= \frac{\ln\ln\beta}{2\ln\ln\ln\beta}-1$. Any constant competitive algorithm must make at least $T$ choices with probability higher than $1-\frac{1}{\ln \ln \ln \beta}$ in the first $n_{T}$ variables.

\end{claim}
\begin{proof}

\begin{figure}[H]
  \centering
  \includegraphics[width=0.4\linewidth]{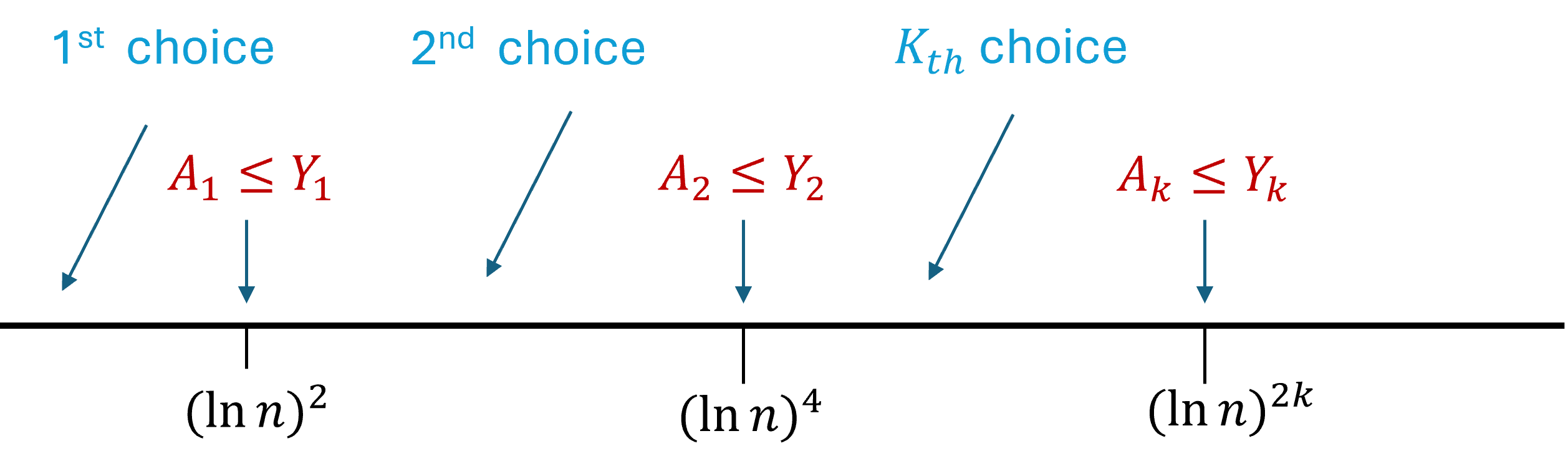}
  \caption{intervals for $\beta =e^n$}
  \label{fig:intervals}
\end{figure}

Assume by contradiction that there exists an algorithm which makes fewer than $T$ choices in the first $n_{T}$ variables with probability of more than $\frac{1}{\ln \ln \ln \beta }$ and that its competitive ratio is constant. By looking at the disjoint intervals $[1,n_1],[n_1+1,n_2],...,[n_{T-1}+1,n_{T}]$.

Let $E$ be the event that the algorithm makes fewer than $T$ choices in the first $n_T$ variables. On event $E$ it must skip at least one of the $T$ intervals, so by averaging, there exists an interval $k$ with
\[
\Pr(\text{skip interval }k)\;\ge\;\Pr(E)\cdot\frac{1}{T}\;\ge\;\frac{1}{\ln\ln\ln\beta}\cdot\frac{2\ln\ln\ln\beta}{\ln\ln\beta}\;=\;\frac{2}{\ln\ln\beta}.
\]

Denote the index of the interval it skipped on with the highest probability as $k$. Hence, the probability that the algorithm skipped on it is bigger than $\frac{2}{\ln \ln \beta}$.
When the algorithm skipped a value in interval $k$,  its decision was based only on the first $n_k$ variables. Hence, assuming $A_{k-1} \le Y_{k-1}$, all values revealed among the first $n_{k-1}$ variables are at least $A_{k-1}$; therefore any value chosen by the algorithm before interval $k$ is at least $A_{k-1}$, and assuming the algorithm makes no choice in interval $k$,  its expectation in this case is at least $\mathbb{E}[\min(A_{k-1},Z_k)]=C_k$ and by Claim~\ref{clm:Ck} we get that:
\[
\mathbb{E}[\min(A_{k-1},Z_k)]=C_k\ge \frac{\ln^2\beta}{2(\ln\ln \beta)^{2k}}.
\]
Denote the event that the algorithm skipped interval $k$ as $G$ and the event that $Y_{k-1}\ge A_{k-1}$ as $S_k$. We can bound $\Pr(S_k\cap G)$ by using the fact that 
\[
\Pr(S_k)= \Pr(S_k\cap G)+\Pr (S_k\cap G^c)\le \Pr(S_k\cap G)+\Pr(G^c).
\]
Hence,
\[
\Pr(S_k \cap G)\ge \Pr(S_k)-\Pr(G^c)\ge 1-\frac{1}{\ln \ln \beta}-\left(1-\frac{2}{\ln \ln \beta}\right)=\frac{1}{\ln \ln \beta}.
\]
Therefore, we can bound the expectation of the algorithm by:
\[
\mathbb{E}[ALG]\ge \Pr(S_k \cap G)\cdot C_k \geq \frac{1}{\ln \ln \beta}\cdot\frac{\ln^2\beta}{2(\ln\ln \beta)^{2k}}=\frac{\ln^2\beta }{2(\ln \ln \beta)^{2k+1}}.\]

Hence, the competitive ratio of such an algorithm is
\begin{eqnarray*}
\frac{\mathbb{E}[ALG]}{\mathbb{E}[OPT]}
&\ge&
\frac{\frac{\ln ^2\beta }{2(\ln \ln \beta)^{2k+1}}}{\ln \beta +1}
\;\ge\;
\frac{\ln \beta}{4(\ln \ln \beta)^{2k+1}}
\;\ge\;
\frac{\ln \beta}{4(\ln \ln \beta)^{\frac{\ln\ln  \beta}{\ln \ln \ln  \beta}-1}}
\\[1mm]
&=&
\frac{\ln \ln \beta}{4}\cdot
\frac{\ln\beta}{(\ln\ln\beta)^{\frac{\ln\ln\beta}{\ln\ln\ln\beta}}}
\;=\;
\frac{\ln \ln  \beta}{4}\cdot\frac{\ln\beta}{\ln\beta}
\;=\;
\frac{\ln\ln\beta}{4}\
\end{eqnarray*}

which means the algorithm is not constant competitive.
Therefore, for any algorithm to be constant competitive it must make at least $T$ choices with probability of more than $1-\frac{1}{\ln \ln \ln \beta }$ in the first  $n_{T}$ variables.
\end{proof}

To complete the proof of Theorem \ref{thm:lower bound} (in expectation) we use Claim \ref{clm:num of choices} (high probability) to get that any constant competitive algorithm must make at least
\[
\mathbb{E}[choices]\geq \left(1-\frac{1}{ \ln \ln \ln \beta}\right)\cdot \left(\frac{\ln \ln \beta}{2\ln \ln \ln \beta}-1\right)\geq \frac{\ln \ln \beta}{4\ln \ln \ln \beta} 
\]
choices in expectation. As mentioned at the beginning of the proof this yields the required bound of $\Omega \left(\frac{\ln \ln M}{\ln \ln \ln M}\right)$ for $M=O\left(\frac{e^n}{n}\right)$ given the relationship $M=\Theta\left( \frac{\beta}{\ln \beta}\right)$.
For $M=\Omega\left(\frac{e^n}{n}\right)$ we use $\beta=e^n$ and get a lower bound of $\Omega\left(\frac{\ln n}{\ln \ln n}\right)$.
\end{proof}

\section{A lower bound on the worst number of choices}
\label{sec:worstcase}
In the previous sections we provided a constant competitive algorithm with at most $O(\min\{\ln\ln M,\ln n\})$ choices in expectation. The natural question is to try to replace expectation with a deterministic upper bound on the number of choices. Unfortunately we show that this is impossible as a function of $n$. Specifically, we show that limiting the algorithm to at most $n\!-\!1$ choices prevents any constant competitive ratio, even for i.i.d.\ inputs. The proof constructs a discrete hard instance illustrating this limitation.

\begin{theorem}
Any algorithm for arbitrary i.i.d. variables that is always limited to $n-1$ choices cannot be constant competitive for large enough n.
\end{theorem}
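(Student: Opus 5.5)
We construct, for each target constant $c$ and all large $n$, an i.i.d.\ discrete distribution on super-exponentially growing values, as suggested in the section's overview. Fix a large parameter $L$ (eventually $L\to\infty$ with $n$). Let $X$ take values $v_1<v_2<\dots<v_n$ with $v_j = L^{\,2^j}$, or any sequence satisfying $v_{j+1}\ge L\cdot v_j/q_j$, with probabilities $\Pr[X=v_j]=q_j$. The $q_j$ are tuned so that the realization sequence we target, namely $X_1=v_n, X_2=v_{n-1},\dots,X_n=v_1$ (strictly decreasing), has non-negligible probability relative to the scale of $\mathbb{E}[\min]$. Since an algorithm limited to $n-1$ choices cannot accept all $n$ values of a strictly decreasing sequence, it must skip some $X_i=v_{n-i+1}$. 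We then want the skipped improvement to cost a factor $\ge L$ relative to $\opt$ in that scenario.

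The key steps, in order, are as follows.

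First, the reduction to one sequence. Note that for a fixed (possibly randomized) algorithm, on the decreasing prefix $v_n,v_{n-1},\dots$, skipping at step $i$ is decided before later values are seen. Extend the argument by conditioning on the history $h_i=(v_n,\dots,v_{n-i+1})$. Let $i$ be a step with $\Pr[\text{skip at } i \mid h_i]\ge 1/n$ under the deterministic path where all choices happened. Such an $i$ exists by averaging over the forced skip. Use $\Pr[\text{skip at } i\mid h_{i}]\cdot\Pr[h_i]$ rather than requiring the whole sequence.

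Second, the cost after the skip. After skipping $v_{n-i+1}$ the algorithm holds at best $v_{n-i+2}$. With constant probability (for instance, all remaining draws are $\ge v_{n-i+1}$), the prophet pays at most $v_{n-i+1}$, while the algorithm pays about $v_{n-i+2}\ge L v_{n-i+1}$. This gives a contribution $\Pr[h_i]\cdot\frac1n\cdot \Pr[\text{rest}]\cdot v_{n-i+2}$ to $\mathbb{E}[\alg]$.

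Third, comparing to $\mathbb{E}[\opt]$. Choose $q_j$ and the growth of $v_j$ so that $\mathbb{E}[\opt]$ is dominated by small values. The probabilities must make $\Pr[h_i]\Pr[\text{rest}]\,v_{n-i+2}/n$ at least $L\,\mathbb{E}[\opt]$ for \emph{every} $i$ simultaneously. A natural choice is to make each $v_j$ large enough that $q$-weighted terms satisfy $q_j$-events times $v_{j+1}$ exceeding $L n\cdot\sum_{j'} (\text{contribution of } v_{j'} \text{ to } \mathbb{E}[\opt])$. Because the values are free to grow super-exponentially while the probabilities are fixed first, we can choose $v_j$ inductively: fix $q_j$ (e.g., $q_j=1/n$ with the remainder on $v_n$), then pick $v_{j+1}$ huge compared to everything defined so far times $n^{O(n)} L$. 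Then $\mathbb{E}[\opt]\le v_1+\sum_j \Pr[\min\ge v_j]v_j$, and the term of index $j$ is beaten by the loss term involving $v_{j+1}$, since $\Pr[h_i]\ge n^{-n}$.

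The main obstacle is the bookkeeping in this third step. The prophet's expectation also has contributions from large values: $\Pr[\min\ge v_{n-i+2}]v_{n-i+2}$ involves the same huge value that appears in the loss. Hence the loss must beat it by a factor $L$. Here the gap is $\Pr[h_i]\cdot\frac1n\cdot\Pr[\text{rest}]$ versus $\Pr[\text{all } \ge v_{n-i+2}]$. The latter requires all $n$ draws to be in the top $i-1$ levels, while the former forces only $i$ draws there. So I would make the $q_j$ for high levels very small, e.g.\ $q_{n-m}=\epsilon^{m}$-type tails with $\epsilon\ll 1/(nL)$, so that requiring $n-i$ additional high draws costs a factor $\epsilon^{n-i}$ that beats $n\cdot n^{n}L$. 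Tuning these so that both sides compare correctly for all $i$ is the delicate point.
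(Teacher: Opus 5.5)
There is a genuine gap, in two places.

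\textbf{The construction.} As stated it cannot work. With support $\{v_1<\dots<v_n\}$ and chain $v_n,v_{n-1},\dots,v_1$, skipping $X_1=v_n$ costs nothing, because no atom lies above $v_n$ to force the algorithm into. Your ``holds at best $v_{n-i+2}$'' is undefined at $i=1$.

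More strongly, \emph{every} distribution with only $n$ atoms admits an exactly optimal algorithm with at most $n-1$ choices. Skip $X_1$ if it is the top atom, and otherwise accept it. Thereafter accept any value strictly below the current holding, or anything if nothing is held. This always ends at $\min_j X_j$. It never needs $n$ choices, because $n$ strict records among $n$ draws from $n$ atoms force $X_1$ to be the top atom.

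So you need at least $n+1$ atoms: a decreasing chain of length $n$, plus an atom above it that punishes skipping the chain's first element. This is the role of the paper's $v_1=t^n$, sitting above the chain $v_2>\dots>v_n>v_{n+1}=0$, whose bottom atom has probability close to $1$.

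Step~2 has a further problem. The event ``all remaining draws are $\ge v_{n-i+1}$'' does not force the algorithm to pay $v_{n-i+2}$, since it can still accept a later copy of $v_{n-i+1}$. The remaining draws must be strictly above the skipped value, and that event is far from having constant probability.

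\textbf{Step~3.} This is where the proof actually lives, and the strategy you sketch fails. The algorithm chooses where to skip, so each single loss term must beat all of $\mathbb{E}[\text{OPT}]$, including the terms $\Pr[\min_j X_j\ge v]\,v$ for levels $v$ above the skip.

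If you fix the $q_j$ first and then let the values explode, the top atom's term $\Pr[X=v_{\mathrm{top}}]^n\,v_{\mathrm{top}}$ dwarfs what a skip near the bottom of the chain costs. Plain greedy (accept strict new minima until the budget runs out) then becomes $(1+o(1))$-competitive.

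Your heuristic that the loss side saves a factor $\epsilon^{n-i}$ is also off. With the corrected rest-event, the factor $\Pr[X\ge v_m]^{n-i}$ appears on both sides. Write $Q_m=\Pr[X\ge v_m]$ and consider the skip after which the algorithm holds $v_m$. The ratio of your loss term to the prophet's term $Q_m^n v_m$ is then, apart from your $1/n$, equal to $\frac{q_{m-1}}{Q_m}\prod_{j=m}^{n}\frac{q_j}{Q_m}$. Each lower level's probability must therefore exceed all the mass above it by more than the accumulated penalties $Q_m/q_j$. Geometric tails $q_j\approx\epsilon^{j}$ give only about $\epsilon^{(n-m+2)(n-m-1)/2}\ll 1$ for skips near the bottom.

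The missing idea is the paper's, which has two parts:
\begin{enumerate}
\item Couple values to tails: $v_k=t^n/\bigl(k\,(n^{(k-1)!})^n\bigr)$ for $2\le k\le n$, with values decreasing in $k$. Then every prophet term $\Pr[\min_j X_j\ge v_k]\,v_k$ equals $1/k$, so $\mathbb{E}[\text{OPT}]\le 2\ln n$.
\item Let the tail mass grow factorially as values decrease: $\Pr[X\ge v_k]=n^{(k-1)!}/t$ with $t=n^{n!}$. Then each deterministic algorithm, classified by its first skip on the chain, loses at least $n/8$.
\end{enumerate}

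This leaves only $n/\ln n$ slack, which your extra factor $1/n$ from averaging over skip positions would exhaust. That factor is unnecessary: against a fixed distribution, a randomized algorithm is a mixture of deterministic ones.
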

\begin{proof}
Define the following probability distribution over a random variable \( X \).

Let
\[
 t = n^{n!}.
\]

Then \( X \) takes values as follows:

\[
\Pr[X = v_k] = 
\begin{cases}
\displaystyle \frac{1}{t},& \text{for } k=1,\\[2mm] 
\displaystyle \frac{n-1}{t},& \text{for } k=2,\\[2mm] 
\displaystyle \frac{n^{(k-1)!}-n^{(k-2)!}}{t}, & \text{for } 3 \leq k \leq n, \\[2mm]
1 - \displaystyle\frac{n^{(n-1)!}}{t}, & \text{for } k = n+1
\end{cases}
\]

where each value \( v_k \) is defined by:

\[
v_k = 
\begin{cases}
t^n, &\text{for } k=1,\\[2mm] 
\displaystyle \frac{t^n}{k \cdot \left( n^{(k-1)!}\right)^n}, & \text{for } 2 \leq k \leq n, \\[2mm]
0, & \text{for } k = n+1.
\end{cases}
\]

Specifically 
\( v_1 = t^n \), 
\( v_2 = \frac{t^n}{2 \cdot (n)^n} \), 
\( v_3 = \frac{t^n}{3 \cdot ( n^2)^n} \), 
and so on up to \( v_n \).

This defines a discrete probability distribution with \( n+1 \) outcomes.
We will show that any algorithm with $n-1$ choices doesn't achieve a constant competitive ratio.
\newline\textbf{Expectation of the Minimum.}
Let \(Y=\min\{x_1,\dots,x_n\}\).
\[
\mathbb{E}[Y]= \sum_{k=1}^{n+1} \Pr[Y=v_k]\cdot v_k \leq  \sum_{k=1}^{n+1}\Pr[Y\geq v_k]\cdot v_k  = \sum_{k=1}^{n}\Pr[Y\geq v_k]\cdot v_k\ .
\]
By definition for $k\ge 2$
\[
\Pr[Y\geq v_k] =  \frac{\left( 1+ (n-1)+ \sum_{j=2}^{k-1} \left(n^{j!}-n^{(j-1)!}\right)\right)^n}{t^n}=\frac{\left(n^{(k-1)!}\right)^n}{t^n} .
\]
This means that for $k\ge2$:
\[
\Pr[Y\geq v_k]\cdot v_k =  \frac{\left( n^{(k-1)!}\right)^n}{t^n}\cdot  \frac{t^n}{k \cdot \left( n^{(k-1)!}\right)^n}= \frac{1}{k}\ .
\]
Moreover, for $k=1$ we get that
\[
\Pr[Y\ge v_1] = \frac{1}{t^n} \ .
\]
Hence:
\[
\Pr[Y\ge v_1]\cdot v_1 = \frac{1}{t^n}\cdot t^n=1 \ .
\]

Ergo, in summary, we get that:
\[
\mathbb{E}[Y] \leq \sum_{k=1}^{n}\Pr[Y\geq v_k]\cdot v_k =\sum_{k=1}^{n}\frac{1}{k}\leq 2\ln (n) \ .
\]
\textbf{Expectation of cost of any algorithm with $n-1$ choices.}
We can analyze any algorithm by looking only at what it does on the following series:
\[x_1=v_2,x_2=v_3,...,x_{n-1}=v_n\] 
or a prefix of it.
Any algorithm when it sees the  variable $x_k=v_{k+1}$ must decide if to skip the variable $x_k$ and risk the option that all the next variables will be bigger than $x_k$. Otherwise, for every $1\le k\le n-1$ take the variable $x_{k}$ and risk the option that the last variable will be $v_{n+1}$.
Hence, we analyze only the following specific cases and see how each one contributes to the expectation of the algorithm:
\begin{align*}
\sigma_1 &= v_2,\{v_1\}^{n-1},\\
\sigma_2 &= v_2,v_3,\{v_1,v_2\}^{n-2},\\
\sigma_k &= v_2,v_3,...,v_k,v_{k+1},\{v_1,...,v_k\}^{n-k},\\
\sigma_{n-1} &= v_2,v_3,...,v_n,\{v_1,..,v_n,v_{n+1}\}^1,
\end{align*}
where $\{v_1,\dots,v_k\}^{n-k}$ denotes all sequences where the last $n-k$ variables are independently drawn from this set.

\begin{figure}[H]
  \centering
  \includegraphics[width=0.4\linewidth]{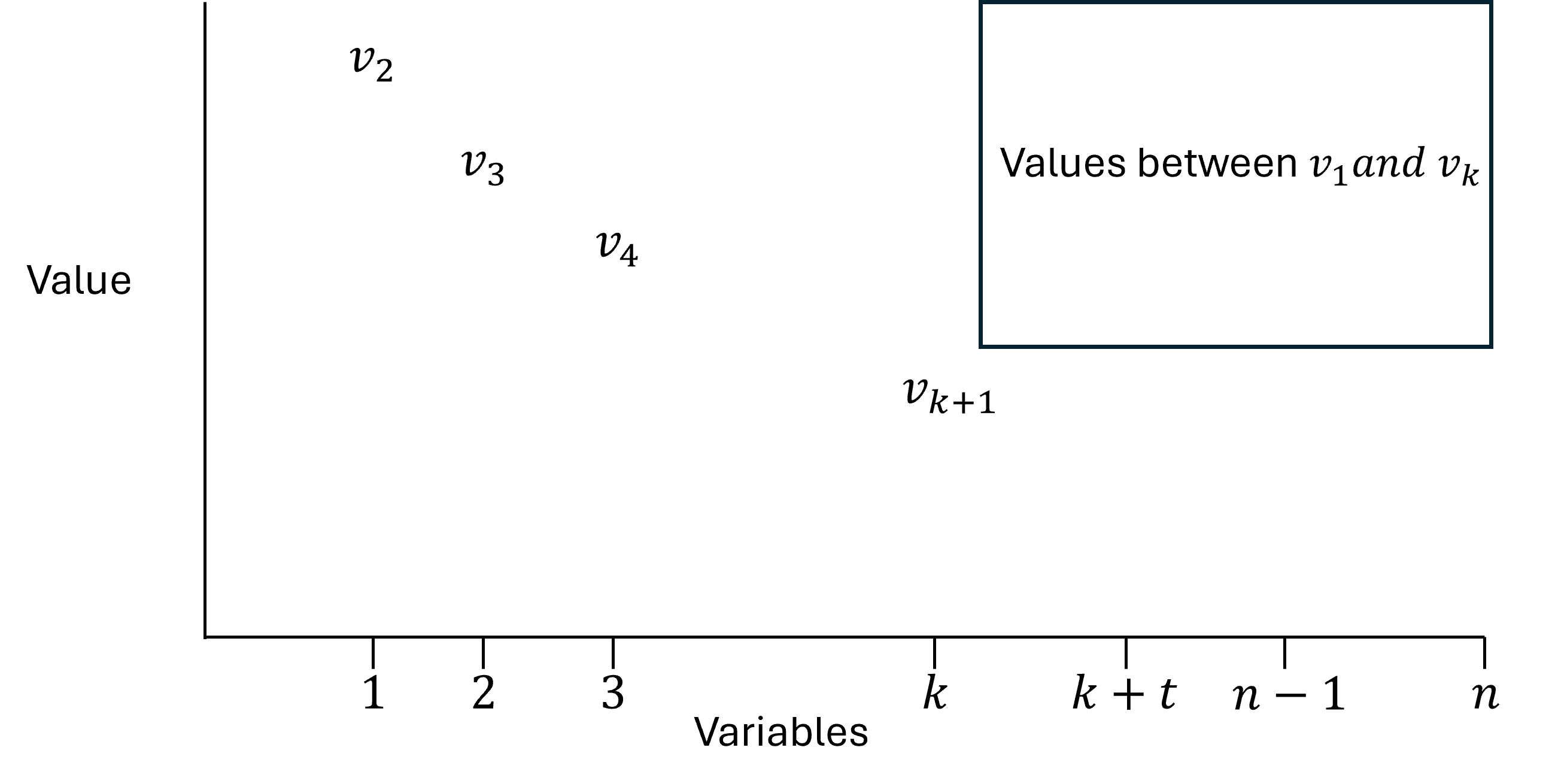}
  \caption{$\sigma_k$}
  \label{fig:sigma_k}
\end{figure}
Denote $ALG_k$ as the class of algorithms that skip $x_k$ when it encounters the series $x_1 =v_2,x_2=v_3 ,...x_k=v_{k+1}$. And denote $ALG_n$ the class of algorithms that don't skip any variable in the series  $x_1=v_2,x_2=v_3,...x_{n-1}=v_n$. Consider the specific prefix $x_1=v_2,\dots,x_{n-1}=v_n$.
Since the algorithm is deterministic, its behavior on this
prefix is fixed: at each step $i$ it either takes $x_i$ or
skips it. If there exists an index $k\in\{1,\dots,n-1\}$
such that the algorithm skips $x_k=v_{k+1}$ on this prefix,
let $k$ be the smallest such index; then the algorithm belongs
to the class $ALG_k$. Otherwise, the algorithm takes all of
$x_1,\dots,x_{n-1}$, so it belongs to the class $ALG_n$.
Thus, every deterministic algorithm belongs to at least one
of the classes $ALG_k$, $1\le k\le n$. It therefore suffices to
lower bound the expected cost of every algorithm in each class $ALG_k$.
For the class $ALG_n$ The only problematic series may be $\sigma_{n-1}$
\[
\Pr[\sigma = \sigma_{n-1}] = \Pr[x_1 =v_2,x_2=v_3,..,x_{n-1}=v_n] =  \frac{(n-1)\cdot\prod _{j=3}^{n}\left(n^{(j-1)!}-n^{(j-2)!}\right)}{t^{n-1}}\ .
\]
So the contribution to the expectation of $ALG_n$ is:

\begin{eqnarray*}
\Pr[\sigma=\sigma_{n-1}]\cdot v_n &=& \frac{(n-1)\cdot\prod _{j=3}^{n}\left(n^{(j-1)!}-n^{(j-2)!}\right)}{t^{n-1}}\cdot \frac{t^n}{n \cdot \left( n^{(n-1)!}\right)^n} \\
&=&  \frac{(n-1)\cdot\prod _{j=2}^{n-1}\left(n^{(j)!}-n^{(j-1)!}\right)}{n }\cdot \frac{t}{n^{n!}}
\geq \frac{n^{(n-1)!}}{4} \ge n    \ .
\end{eqnarray*}
where the first inequality uses 
\[
\frac{n-1}{n}\ge \frac{1}{2} \text{ and } \left(n^{(n-1)!}-n^{(n-2)!}\right)\ge\frac{1}{2}\cdot n^{(n-1)!},
\]
assuming $n\ge3$.

Hence, for every algorithm $A\in ALG_n$ we get that $\mathbb{E}[A]\geq n$ .
For the class $ALG_k$ the problematic series is $\sigma_k$ since it doesn't achieve the minimum value on it. The probability that this series happens is:
\[
\begin{aligned}
\Pr[\sigma=\sigma_k]
&=\left(\frac{(n-1)\cdot\prod_{j=3}^{k+1}\!\bigl(n^{(j-1)!}-n^{(j-2)!}\bigr)}{t^k}\right)
 \cdot\left(\frac{1+(n-1)+\sum_{j=3}^{k}\!\bigl(n^{(j-1)!}-n^{(j-2)!}\bigr)}{t}\right)^{n-k} \\
&=\left(\frac{(n-1)\cdot\prod_{j=3}^{k+1}\!\bigl(n^{(j-1)!}-n^{(j-2)!}\bigr)}{t^k}\right)\cdot\left(\frac{n^{(k-1)!}}{t}\right)^{n-k}\\
&\ge (n-1)\cdot\left(\prod_{j=2}^{k}\frac{1}{2}\cdot n^{(j)!}\right)\cdot\frac{ \left(n^{(k-1)!}\right)^{n-k}}{t^n}
\ge \frac{\left(\prod_{j=1}^{k} n^{(j)!}\right)}{2^k}\cdot \frac{ \left(n^{(k-1)!}\right)^{n-k}}{t^n}\ . 
\end{aligned}
\]

Note that the inequalities hold since $n^{(k-1)!}-n^{(k-2)!}\ge \frac{1}{2}\cdot n^{(k-1)!}$ for $k\ge 3$ and that $(n-1)\ge \frac{1}{2}\cdot n$  for $n\ge2$.
Hence,

\begin{eqnarray*}
\Pr[\sigma=\sigma_k]\cdot v_k & \ge & \frac{\prod_{j=1}^{k} n^{(j)!}}{2^k}\cdot \frac{ \left(n^{(k-1)!}\right)^{n-k}}{t^n}\cdot \frac{t^n}{k \cdot \left( n^{(k-1)!} \right)^n} \\
& = & \frac{\prod_{j=1}^k n^{(j)!}}{k \cdot 2^k\cdot n^{(k)!}} 
 = \frac{n^{(k)!}\prod_{j=1}^{k-1} n^{(j)!}}{k \cdot 2^k\cdot n^{(k)!} } \\
& =  & \frac{\prod_{j=1}^{k-1} n^{(j)!}}{k\cdot2^k} 
\geq  \frac{n}{8}\ .
\end{eqnarray*}

The last inequality holds for $k\geq 2$. Thus, for every $A\in ALG_k$ we get that $\mathbb{E}[A]\geq \frac{n}{8}$ for sufficiently large $n$ and for $k\ge2$.
For algorithm $A \in ALG_1$ the only problematic series we need
to consider is $\sigma_1$.

\begin{align*}
 \mathbb{E}[A] \geq \Pr[\sigma=\sigma_1]\cdot v_1 &= \frac{n-1}{t}\cdot \left(\frac{1}{t}\right)^{n-1}\cdot t^n= n-1 \ .
\end{align*}
Combining the above cases, we have shown that for every $k\in\{1,\dots,n\}$,
every deterministic algorithm $A$ in the class $ALG_k$ satisfies
$\mathbb{E}[A] \ge n/8$ for all sufficiently large $n$.
Since every deterministic algorithm belongs to $ALG_k$ for some
$k\in\{1,\dots,n\}$, it follows that every deterministic algorithm $A$
with at most $n-1$ choices satisfies
$\mathbb{E}[A]\ge n/8$.  Thus, the competitive ratio for any algorithm with $n-1$ choices is at least:
\[
\frac{\mathbb{E}[A]}{\mathbb{E}[Y]}\geq \frac{n}{16\ln(n)}
\]
which is unbounded as required, so no constant competitive ratio is possible.
\end{proof}

\section{Conclusion}
In this paper, we study the minimization prophet inequality under a multi-choice relaxation, where the algorithm is allowed to select multiple variables and pays the minimum among them. 
We show that for adversarial order an almost linear number of choices in expectation is required to achieve a constant-competitive ratio. 
However, for prophet secretary settings we prove that allowing $O(\min\{\ln \ln M, \ln n\})$ choices \textit{in expectation} is sufficient to achieve a constant competitive ratio (recall that $M$ is the ratio of the minimum expected value of any single variable to the expected minimum value of all variables).
We further establish that this bound is tight up to doubly or triply log factors. 
We also demonstrate that insisting on deterministic upper bound even for i.i.d.\ inputs, implies that $n$ choices are required to achieve a constant competitive ratio. Finally, we extend our algorithm to the multi-unit variant, showing that for i.i.d.\ inputs, a constant competitive ratio is achievable with only a small expected number of choices.
Our work leaves open the possibility that the impossibility results stem specifically from the magnitude of $M$, raising two open problems:

\begin{enumerate}
\item \textbf{Expected number of choices for adversarial order:} Our lower bound of $\Omega(\frac{n}{\ln n})$ expected choices for non-i.i.d.\ distributions uses a construction where $\ln \ln M \geq n$. Does there exist an algorithm for adversarial order that achieves a constant competitive ratio with $O(\min\{ \ln \ln M, \frac{n}{\ln n}\})$ choices in expectation?
\item \textbf{deterministic number of choices} Our lower bound showing that $n-1$ choices are insufficient relies on a distribution where $\ln \ln M \geq n$. Does there exist an algorithm for i.i.d.\ variables that achieves a constant competitive ratio with a deterministic upper bound of $O(\min\{ \ln \ln M, n\})$ on the number choices?
    
\end{enumerate}
\bibliography{references}

@article{KrengelSucheston1977,
  author    = {Ulrich Krengel and Louis Sucheston},
  title     = {Semiamarts and finite values},
  journal   = {Bulletin of the American Mathematical Society},
  volume    = {83},
  number    = {4},
  pages     = {745--747},
  year      = {1977},
  month     = jul,
  doi       = {10.1090/S0002-9904-1977-14375-0}
}

@incollection{KrengelSucheston1978,
  author    = {Ulrich Krengel and Louis Sucheston},
  title     = {On semiamarts, amarts, and processes with finite value},
  booktitle = {Probability on Banach Spaces},
  volume    = {4},
  pages     = {197--266},
  year      = {1978},
  publisher = {Dekker}
}

@article{HillK82,
  author    = {Hill, Theodore P. and Kertz, Robert P.},
  title     = {Stop rules and supremum expectations of i.i.d. random variables},
  journal   = {Annals of Probability},
  volume    = {10},
  number    = {2},
  pages     = {336--345},
  year      = {1982},
  publisher = {Institute of Mathematical Statistics},
  doi       = {10.1214/aop/1176993873}
}

@article{AssafSamuelCahn2000,
  author    = {Assaf, David and Samuel-Cahn, Ester},
  title     = {Simple ratio prophet inequalities for a mortal with multiple choices},
  journal   = {Journal of Applied Probability},
  volume    = {37},
  number    = {4},
  pages     = {1084--1091},
  year      = {2000},
  publisher = {Cambridge University Press},
  doi       = {10.1239/jap/1014843085}
}

@article{Assaf_Goldstein_Samuel-Cahn_2004, 
title={Two-choice optimal stopping},
volume={36}, 
DOI={10.1239/aap/1103662960},
number={4},
journal={Advances in Applied Probability},
author={Assaf, David and Goldstein, Larry and Samuel-Cahn, Ester},
year={2004},
pages={1116–1147}}

@article{BrustleEtAl2024,
  author       = {Johannes Brustle and
                  Jos{\'{e}} Correa and
                  Paul Duetting and
                  Victor Verdugo},
  title        = {The Competition Complexity of Dynamic Pricing},
  journal      = {Math. Oper. Res.},
  volume       = {49},
  number       = {3},
  pages        = {1986--2008},
  year         = {2024},
  url          = {https://doi.org/10.1287/moor.2022.0230},
  doi          = {10.1287/MOOR.2022.0230},
  bibsource    = {dblp computer science bibliography, https://dblp.org}
}

@article{EsfandiariEtAl2015,
  author       = {Hossein Esfandiari and
                  MohammadTaghi Hajiaghayi and
                  Vahid Liaghat and
                  Morteza Monemizadeh},
  title        = {Prophet Secretary},
  journal      = {{SIAM} J. Discret. Math.},
  volume       = {31},
  number       = {3},
  pages        = {1685--1701},
  year         = {2017},
  url          = {https://doi.org/10.1137/15M1029394},
  doi          = {10.1137/15M1029394},
  bibsource    = {dblp computer science bibliography, https://dblp.org}
}

@article{KERTZ198688,
title = {Stop rule and supremum expectations of i.i.d. random variables: A complete comparison by conjugate duality},
journal = {Journal of Multivariate Analysis},
volume = {19},
number = {1},
pages = {88-112},
year = {1986},
issn = {0047-259X},
doi = {https://doi.org/10.1016/0047-259X(86)90095-3},
url = {https://www.sciencedirect.com/science/article/pii/0047259X86900953},
author = {Robert P Kertz},
}

@inproceedings{LivanosMehta2022,
  author       = {Vasilis Livanos and
                  Ruta Mehta},
  editor       = {David P. Woodruff},
  title        = {Minimization is Harder in the Prophet World},
  booktitle    = {Proceedings of the 2024 {ACM-SIAM} Symposium on Discrete Algorithms,
                  {SODA} 2024, Alexandria, VA, USA, January 7-10, 2024},
  pages        = {424--461},
  publisher    = {{SIAM}},
  year         = {2024},
  url          = {https://doi.org/10.1137/1.9781611977912.17},
  doi          = {10.1137/1.9781611977912.17},
  bibsource    = {dblp computer science bibliography, https://dblp.org}
}

@article{Alaei2014,
  author       = {Saeed Alaei},
  title        = {Bayesian Combinatorial Auctions: Expanding Single Buyer Mechanisms
                  to Many Buyers},
  journal      = {{SIAM} J. Comput.},
  volume       = {43},
  number       = {2},
  pages        = {930--972},
  year         = {2014},
  url          = {https://doi.org/10.1137/120878422},
  doi          = {10.1137/120878422},
  bibsource    = {dblp computer science bibliography, https://dblp.org}
}

@inproceedings{Livanos2024,
  author       = {Vasilis Livanos and
                  Ruta Mehta},
  editor       = {Itai Ashlagi and
                  Aaron Roth},
  title        = {Minimization {I.I.D.} Prophet Inequality via Extreme Value Theory:
                  {A} Unified Approach},
  booktitle    = {Proceedings of the 26th {ACM} Conference on Economics and Computation,
                  {EC} 2025, Stanford University, Stanford, CA, USA, July 7-10, 2025},
  pages        = {1157--1179},
  publisher    = {{ACM}},
  year         = {2025},
  url          = {https://doi.org/10.1145/3736252.3742682},
  doi          = {10.1145/3736252.3742682},
  bibsource    = {dblp computer science bibliography, https://dblp.org}
}

@article{Eshani2024,
  author       = {Soheil Ehsani and
                  MohammadTaghi Hajiaghayi and
                  Thomas Kesselheim and
                  Sahil Singla},
  title        = {Prophet Secretary for Combinatorial Auctions and Matroids},
  journal      = {{SIAM} J. Comput.},
  volume       = {53},
  number       = {6},
  pages        = {1641--1662},
  year         = {2024},
  url          = {https://doi.org/10.1137/19m1264047},
  doi          = {10.1137/19M1264047},
  bibsource    = {dblp computer science bibliography, https://dblp.org}
}

@article{KLEINBERG2019,
  author       = {Robert Kleinberg and
                  S. Matthew Weinberg},
  title        = {Matroid prophet inequalities and applications to multi-dimensional
                  mechanism design},
  journal      = {Games Econ. Behav.},
  volume       = {113},
  pages        = {97--115},
  year         = {2019},
  url          = {https://doi.org/10.1016/j.geb.2014.11.002},
  doi          = {10.1016/J.GEB.2014.11.002},
  bibsource    = {dblp computer science bibliography, https://dblp.org}
}

@article{Chawala24,
  author       = {Shuchi Chawla and
                  Nikhil R. Devanur and
                  Thodoris Lykouris},
  title        = {Static Pricing for Multi-unit Prophet Inequalities},
  journal      = {Oper. Res.},
  volume       = {72},
  number       = {4},
  pages        = {1388--1399},
  year         = {2024},
  url          = {https://doi.org/10.1287/opre.2023.0031},
  doi          = {10.1287/OPRE.2023.0031},
  bibsource    = {dblp computer science bibliography, https://dblp.org}
}

@article{Jiashou22,
  author       = {Jiashuo Jiang and
                  Will Ma and
                  Jiawei Zhang},
  title        = {Tight Guarantees for Multiunit Prophet Inequalities and Online Stochastic
                  Knapsack},
  journal      = {Oper. Res.},
  volume       = {73},
  number       = {3},
  pages        = {1703--1721},
  year         = {2025},
  url          = {https://doi.org/10.1287/opre.2022.0309},
  doi          = {10.1287/OPRE.2022.0309},
  bibsource    = {dblp computer science bibliography, https://dblp.org}
}

@article{Correa21,
  author       = {Jos{\'{e}} Correa and
                  Patricio Foncea and
                  Ruben Hoeksma and
                  Tim Oosterwijk and
                  Tjark Vredeveld},
  title        = {Posted Price Mechanisms and Optimal Threshold Strategies for Random
                  Arrivals},
  journal      = {Math. Oper. Res.},
  volume       = {46},
  number       = {4},
  pages        = {1452--1478},
  year         = {2021},
  url          = {https://doi.org/10.1287/moor.2020.1105},
  doi          = {10.1287/MOOR.2020.1105},
  bibsource    = {dblp computer science bibliography, https://dblp.org}
}

@article{Junjie24,
  author       = {Junjie Qin and
                  Shai Vardi and
                  Adam Wierman},
  title        = {Minimization Fractional Prophet Inequalities for Sequential Procurement},
  journal      = {Math. Oper. Res.},
  volume       = {49},
  number       = {2},
  pages        = {928--947},
  year         = {2024},
  url          = {https://doi.org/10.1287/moor.2021.0173},
  doi          = {10.1287/MOOR.2021.0173},
  bibsource    = {dblp computer science bibliography, https://dblp.org}
}

@inproceedings{Abolhassani17,
  author       = {Melika Abolhassani and
                  Soheil Ehsani and
                  Hossein Esfandiari and
                  MohammadTaghi Hajiaghayi and
                  Robert D. Kleinberg and
                  Brendan Lucier},
  editor       = {Hamed Hatami and
                  Pierre McKenzie and
                  Valerie King},
  title        = {Beating 1-1/e for ordered prophets},
  booktitle    = {Proceedings of the 49th Annual {ACM} {SIGACT} Symposium on Theory
                  of Computing, {STOC} 2017, Montreal, QC, Canada, June 19-23, 2017},
  pages        = {61--71},
  publisher    = {{ACM}},
  year         = {2017},
  url          = {https://doi.org/10.1145/3055399.3055479},
  doi          = {10.1145/3055399.3055479},
  bibsource    = {dblp computer science bibliography, https://dblp.org}
}

@inproceedings{Azar18,
  author       = {Yossi Azar and
                  Ashish Chiplunkar and
                  Haim Kaplan},
  editor       = {{\'{E}}va Tardos and
                  Edith Elkind and
                  Rakesh Vohra},
  title        = {Prophet Secretary: Surpassing the 1-1/e Barrier},
  booktitle    = {Proceedings of the 2018 {ACM} Conference on Economics and Computation,
                  Ithaca, NY, USA, June 18-22, 2018},
  pages        = {303--318},
  publisher    = {{ACM}},
  year         = {2018},
  url          = {https://doi.org/10.1145/3219166.3219182},
  doi          = {10.1145/3219166.3219182},
  bibsource    = {dblp computer science bibliography, https://dblp.org}
}

@article{Arnosti23,
  author       = {Nick Arnosti and
                  Will Ma},
  title        = {Tight Guarantees for Static Threshold Policies in the Prophet Secretary
                  Problem},
  journal      = {Oper. Res.},
  volume       = {71},
  number       = {5},
  pages        = {1777--1788},
  year         = {2023},
  url          = {https://doi.org/10.1287/opre.2022.2419},
  doi          = {10.1287/OPRE.2022.2419},
  bibsource    = {dblp computer science bibliography, https://dblp.org}
}

@article{Marek20,
  author       = {Marek Adamczyk and
                  Michal Wlodarczyk},
  title        = {Multi-dimensional mechanism design via random order contention resolution
                  schemes},
  journal      = {SIGecom Exch.},
  volume       = {17},
  number       = {2},
  pages        = {46--53},
  year         = {2019},
  url          = {https://doi.org/10.1145/3381329.3381334},
  doi          = {10.1145/3381329.3381334},
  bibsource    = {dblp computer science bibliography, https://dblp.org}
}

@inproceedings{Ezra18,
  author       = {Tomer Ezra and
                  Michal Feldman and
                  Ilan Nehama},
  editor       = {{\'{E}}va Tardos and
                  Edith Elkind and
                  Rakesh Vohra},
  title        = {Prophets and Secretaries with Overbooking},
  booktitle    = {Proceedings of the 2018 {ACM} Conference on Economics and Computation,
                  Ithaca, NY, USA, June 18-22, 2018},
  pages        = {319--320},
  publisher    = {{ACM}},
  year         = {2018},
  url          = {https://doi.org/10.1145/3219166.3219211},
  doi          = {10.1145/3219166.3219211},
  bibsource    = {dblp computer science bibliography, https://dblp.org}
}

@inproceedings{EkbataniNNV24,
  author       = {Farbod Ekbatani and
                  Rad Niazadeh and
                  Pranav Nuti and
                  Jan Vondr{\'{a}}k},
  editor       = {Bojan Mohar and
                  Igor Shinkar and
                  Ryan O'Donnell},
  title        = {Prophet Inequalities with Cancellation Costs},
  booktitle    = {Proceedings of the 56th Annual {ACM} Symposium on Theory of Computing,
                  {STOC} 2024, Vancouver, BC, Canada, June 24-28, 2024},
  pages        = {1247--1258},
  publisher    = {{ACM}},
  year         = {2024},
  url          = {https://doi.org/10.1145/3618260.3649786},
  doi          = {10.1145/3618260.3649786},
  bibsource    = {dblp computer science bibliography, https://dblp.org}
}

@inproceedings{LT25,
  author       = {Ziyun Chen and
                  Zhiyi Huang and
                  Dongchen Li and
                  Zhihao Gavin Tang},
  editor       = {Yossi Azar and
                  Debmalya Panigrahi},
  title        = {Prophet Secretary and Matching: the Significance of the Largest Item},
  booktitle    = {Proceedings of the 2025 Annual {ACM-SIAM} Symposium on Discrete Algorithms,
                  {SODA} 2025, New Orleans, LA, USA, January 12-15, 2025},
  pages        = {1371--1401},
  publisher    = {{SIAM}},
  year         = {2025},
  url          = {https://doi.org/10.1137/1.9781611978322.42},
  doi          = {10.1137/1.9781611978322.42},
  bibsource    = {dblp computer science bibliography, https://dblp.org}
}

@article{FredM23,
  author       = {Giordano Giambartolomei and
                  Frederik Mallmann{-}Trenn and
                  Raimundo Saona},
  title        = {Prophet Inequalities: Separating Random Order from Order Selection},
  journal      = {CoRR},
  volume       = {abs/2304.04024},
  year         = {2023},
  url          = {https://doi.org/10.48550/arXiv.2304.04024},
  doi          = {10.48550/ARXIV.2304.04024},
  eprinttype    = {arXiv},
  eprint       = {2304.04024},
  bibsource    = {dblp computer science bibliography, https://dblp.org}
}

@inproceedings{Cor2017,
author = {Correa, Jos\'{e} and Foncea, Patricio and Hoeksma, Ruben and Oosterwijk, Tim and Vredeveld, Tjark},
title = {Posted Price Mechanisms for a Random Stream of Customers},
year = {2017},
isbn = {9781450345279},
publisher = {Association for Computing Machinery},
address = {New York, NY, USA},
url = {https://doi.org/10.1145/3033274.3085137},
doi = {10.1145/3033274.3085137},
booktitle = {Proceedings of the 2017 ACM Conference on Economics and Computation},
pages = {169–186},
numpages = {18},
location = {Cambridge, Massachusetts, USA},
series = {EC '17}
}

@article{SamuelCahn84,
 ISSN = {00911798, 2168894X},
 URL = {http://www.jstor.org/stable/2243359},
 author = {Ester Samuel-Cahn},
 journal = {The Annals of Probability},
 number = {4},
 pages = {1213--1216},
 publisher = {Institute of Mathematical Statistics},
 title = {Comparison of Threshold Stop Rules and Maximum for Independent Nonnegative Random Variables},
 urldate = {2026-02-08},
 volume = {12},
 year = {1984}
}

@inproceedings{Harb25,
  author       = {Elfarouk Harb},
  editor       = {Yossi Azar and
                  Debmalya Panigrahi},
  title        = {New Prophet Inequalities via Poissonization and Sharding},
  booktitle    = {Proceedings of the 2025 Annual {ACM-SIAM} Symposium on Discrete Algorithms,
                  {SODA} 2025, New Orleans, LA, USA, January 12-15, 2025},
  pages        = {1222--1269},
  publisher    = {{SIAM}},
  year         = {2025},
  url          = {https://doi.org/10.1137/1.9781611978322.37},
  doi          = {10.1137/1.9781611978322.37},
  bibsource    = {dblp computer science bibliography, https://dblp.org}
}

@inproceedings{HarbPeled24,
  author       = {Sariel Har{-}Peled and
                  Elfarouk Harb and
                  Vasilis Livanos},
  editor       = {Karl Bringmann and
                  Martin Grohe and
                  Gabriele Puppis and
                  Ola Svensson},
  title        = {Oracle-Augmented Prophet Inequalities},
  booktitle    = {51st International Colloquium on Automata, Languages, and Programming,
                  {ICALP} 2024, Tallinn, Estonia, July 8-12, 2024},
  series       = {LIPIcs},
  pages        = {81:1--81:19},
  publisher    = {Schloss Dagstuhl - Leibniz-Zentrum f{\"{u}}r Informatik},
  year         = {2024},
  url          = {https://doi.org/10.4230/LIPIcs.ICALP.2024.81},
  doi          = {10.4230/LIPICS.ICALP.2024.81},
  bibsource    = {dblp computer science bibliography, https://dblp.org}
}

\section{Missing proofs from section \ref{sec:non_iid}}
\label{app: non_iid}
\subsection{Proof of Claim~\ref{claim:expectation non i.i.d}}
\label{prf: expectation non i.i.d}
\begin{proof}

Let
\[
Y \;=\; \min\{X_1,X_2,\dots,X_n\}. 
\]
Now we will calculate the CDF of Y:
\begin{align*}
\Pr[Y>x] &=\Pr[\min\{X_1,\dots,X_n\}>x] = \prod_{k=1}^{n}\Pr[X_k>x] \\ 
&=\left(\prod_{k=1}^{n-1}\frac{1}{x^{{n^{2k-2n}}}}\right)\cdot \dfrac{1}{x^{1-{\sum_{k=1}^{n-1}n^{2k-2n}}}}=\frac{1}{x}\ .
\end{align*}
Hence,
\[
F_Y(x) =\begin{cases}
1 - \dfrac{1}{x}, & 1\le x < e^{n^{2n}}, \\[1mm]
1, & x\ge e^{n^{2n}}.
\end{cases}\ .
\]
Thus, the expectation of \(Y\) is given by
\[
\mathbb{E}[Y]=1 + \int_{1}^{e^{n^{2n}}} \,1- F_Y(x)\,dx = 1 + \int_{1}^{e^{n^{2n}}}\frac{1}{x}
= 1+ \ln (e^{n^{2n}}) -\ln(1) = 1+n^{2n}.
\]

Hence,

\[
\mathbb{E}\Bigl[\min\{X_1,\dots,X_n\}\Bigr] = 1+n^{2n}. 
\]   
\end{proof}

\subsection{Proof of Claim~\ref{clm: X_k values}}
\label{prf: X_k values}
\begin{proof}

We define the CDF of $Y_k$ as $F_{Y_k}$.
\begin{align*}
\Pr[Y_k>x] &=\Pr[\min\{X_1,\dots,X_k\}>x] = \prod_{i=1}^{k}\Pr[X_i>x] \\ 
&=\left(\prod_{i=1}^{k}\frac{1}{x^{{n^{2i-2n}}}}\right)=\frac{1}{x^{\sum_{i=1}^k n^{2i-2n}}}=\frac{1}{x^{\frac{\sum_{i=1}^k n^{2i}}{n^{2n}}}}.
\end{align*}

Hence, $F_{Y_k}=1-{x^{-\frac{\sum_{i=1}^k n^{2i}}{n^{2n}}}}$. By the definition of the CDF

\[
\Pr\!\bigl[A_k\le Y_k]
 =1-F_{Y_k}(A_k) \ .
\]

First we bound $F_{Y_k}(A_k)$
\begin{align*}
F_{Y_k}(A_k)
&=1-A_k^{-\frac{\sum_{i=1}^k n^{2i}}{n^{2n}}}
  =1-e^{-n^{2n-2k-1}\cdot\frac{\sum_{i=1}^k n^{2i}}{n^{2n}}}
  =1-e^{-\frac{\sum_{i=1}^{k}n^{2i}}{n^{2k+1}}} \\
&\le 1-e^{-\frac{2n^{2k}}{n^{2k+1}}}
  =1-e^{-\frac{2}{n}}
  \le \frac{2}{n},
\end{align*}

where the first inequality holds for $n\ge 2$ since $n^{2k}$ dominates the sum, and the last inequality holds since $e^{-a}\ge 1-a$.
Hence, we get that for every $k$ :
$\Pr[Y_k\ge A_k]  = 1-F_{Y_k}(A_k) \ge 1-\frac{2}{n}.$\
\end{proof}
For each $k$, define
\[
Z_k = \min\{X_{k+1},\dots,X_n\},
\qquad
C_k = \mathbb{E}\!\left[\min\!\left(A_{k-1},Z_k\right)\right].
\]

\section{Missing proofs from section \ref{sec:lowerbound_expected}}
\label{app: lowerbound_expected}
\subsection{Proof of Claim ~\ref{clm: expectation min i.i.d}}
\label{prf:expectation min i.i.d}
\begin{proof}
Let \(X_1, X_2, \dots, X_n\) be independent copies of \(X\) (with the above CDF). Define
\[
Y = \min\{X_1, X_2, \dots, X_n\}\,.
\]

Then, for \(1 \le x < \beta\), we have:
\[
P(Y > x) = (1 - F(x))^n = \left( \frac{1}{x^{1/n}} \right)^n = \frac{1}{x}\,,
\]
so the CDF of \(Y\) is
\[
F_Y(x) = \begin{cases}
1 - \dfrac{1}{x}, & 1\le x < \beta, \\[1mm]
1, & x\ge \beta,
\end{cases}.
\]

At the point \(x = \beta\), the mass is:
\[
P(Y = \beta) = \left(P(X = \beta)\right)^n = \left(\frac{1}{\beta^{1/n}}\right)^n = \frac{1}{\beta}.
\]

The expectation of \(Y\) is given by:
\[
\mathbb{E}[Y] = \int_1^{\beta} x \cdot f_Y(x)\,dx + \beta \cdot \frac{1}{\beta}.
\]

We compute the derivative \(f_Y(x) = \frac{d}{dx} F_Y(x) = \frac{1}{x^2}\), so:
\[
\mathbb{E}[Y] = \int_1^{\beta} x \cdot \frac{1}{x^2}\,dx +\beta \cdot \frac{1}{\beta}
= \int_1^{\beta} \frac{1}{x}\,dx + 1 = \ln \beta + 1 .
\]
In other words, 
\[
\boxed{\mathbb{E}\Bigl[\min\{X_1, \dots, X_n\}\Bigr] = \ln \beta + 1}
\]
\end{proof}

\section{The multi-unit algorithm}
\label{sec:multiunit_algorithm}
In this section we present an algorithm almost identical to that of Section~\ref{sec:algorithm}, which achieves a constant competitive ratio in the multi-unit minimization problem while using only a small expected number of choices. We analyze both its competitive ratio and its expected number of choices.

We define the following algorithm:

\begin{center}
\fbox{\parbox{0.93\textwidth}{
\textbf{Algorithm 2}
\begin{enumerate}
    \item  Find a threshold $t$ (e.g., by binary search) such that 
    \[
    \sum_i p_i = \min\{\max\{4\ln M+r,\,5r\},\,n\},
    \]
    where $p_i = \Pr(X_i\le t)$.
    \item Wait until the first $r$ items with values at most $t$ appear.  
    If only $\ell < r$ such items appear, take the remaining $r-\ell$ items from the last arrivals.
    \item Thereafter, switch only if a newly observed value is strictly smaller than the highest value currently held among the $r$ items.
\end{enumerate}
}}
\end{center}

\begin{theorem}
For i.i.d.\ variables and $r \le \ln M$, the algorithm above is constant competitive  and uses 
\[
O\!\left(\min\!\left\{r\ln\!\tfrac{\ln M}{r},\;r\ln\!\tfrac{n}{r}\right\}\right)
\]
choices in expectation.
\end{theorem}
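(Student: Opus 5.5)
We split the argument into two parts, mirroring the analysis of Algorithm~1: first the competitive ratio, then the expected number of choices. Normalize so that $\mathbb{E}[\min_j X_j]=1$; then $\mathbb{E}[X]=M$ for the common distribution. Since the variables are i.i.d., $p_i=p$ for all $i$. Let $\mu=np=\min\{\max\{4\ln M+r,5r\},n\}$ and let $Z_1\sim\mathrm{Bin}(n,p)$ be the number of samples at most $t$. If $\mu=n$, every value is below $t$. The algorithm then keeps the $r$ smallest values seen so far and attains $\mathrm{OPT}$ exactly.

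\textbf{Competitive ratio.} Write $\mathrm{ALG}\le \mathrm{OPT}+\mathrm{ALG}\cdot\mathbf 1[Z_1<r]$. On $\{Z_1\ge r\}$ the algorithm holds, at the end, the $r$ smallest values among all items at most $t$. The prophet's $r$ smallest values are all at most $t$, hence all were eligible, so on this event $\mathrm{ALG}=\mathrm{OPT}$. On $\{Z_1<r\}$ the algorithm pays at most $\ell t+(r-\ell)\cdot(\text{values of the last arrivals})$; switching only lowers this. Conditioned on $Z_1=\ell$, each of the fallback items is a draw conditioned on exceeding $t$, with expectation at most $M/(1-p)\le 2M$ (using $p\le 1/2$, or handling $p>1/2$ directly since then $\mu$ is large). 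Hence
\[
\mathbb{E}[\mathrm{ALG}\mathbf 1[Z_1<r]]\le \Pr[Z_1<r]\,(rt+2rM).
\]
Chernoff with $\mu\ge 5r$ and $\mu\ge 4\ln M+r$ gives $\Pr[Z_1<r]\le e^{-(\mu-r)^2/(2\mu)}\le e^{-c\mu}\le M^{-c'}$. Using $r\le \ln M$, the factor $r$ is also absorbed, so this is $O(1)$ times $\mathbb{E}[\mathrm{OPT}]\ge 1$ (note $\mathrm{OPT}$ for $r$ units is at least the single minimum). The $rt$ term needs a separate bound: $t$ is at most roughly the $\mu/n$ quantile. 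Using $\Pr[X\le t]=p$ and $\mathbb{E}[\mathrm{OPT}]\ge r\cdot\Pr[\text{$r$-th order statistic}\ge t]\cdot t$-type estimates, I expect $rt\le O(\mathbb{E}[\mathrm{OPT}])\cdot\mathrm{poly}(\mu)$, which is killed by the exponentially small probability. This Chernoff-constant bookkeeping is the main obstacle: the constants $4$ and $5$ must beat both $rM$ and the $rt$ term.

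\textbf{Number of choices.} There are at most $r$ initial choices plus the fallback items, so $O(r)$. After the first $r$ below-threshold items, a switch happens exactly when a new below-threshold value is among the $r$ smallest of the below-threshold values seen so far. Conditioned on $Z_1=m$, the below-threshold values arrive in uniformly random relative order. The expected number of times the $j$-th arrival ($j>r$) is among the top $r$ is $r/j$, giving $\sum_{j=r+1}^m r/j\le r\ln(m/r)+O(r)$. Since $x\mapsto r\ln(\max(x,r)/r)$ is concave, Jensen gives expected choices at most
\[
O(r)+r\ln\!\big(\mathbb{E}[Z_1]/r\big)=O(r)+r\ln(\mu/r).
\]
Finally, $\mu/r\le \min\{(4\ln M+5r)/r,\,n/r\}$, and since $r\le\ln M$ this yields $O(\min\{r\ln(\ln M/r),\,r\ln(n/r)\})$, treating the $O(r)$ term as dominated when $\ln M/r$ is bounded away from $1$, and as $O(r)$ otherwise, consistent with the stated bound.
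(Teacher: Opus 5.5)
Your plan follows the paper: a Chernoff bound on the number of sub-threshold samples for the ratio, and top-$r$ record counting given $Z_1$ plus Jensen for the choices. Your choice count is actually more explicit than the paper's, which only asserts that it mirrors the single-unit analysis. But the competitive-ratio half is not finished, and the route you propose for finishing it fails. The inequality you expect, $rt\le O(\mathbb{E}[\mathrm{OPT}])\cdot\mathrm{poly}(\mu)$, is false. Take $r=1$ and let $X$ be (a smoothed version of) $\epsilon\to 0$ with probability $\lambda/n$ and $e^{\lambda}$ otherwise, with $n$ large. Then $\mathbb{E}[\min_j X_j]\approx 1$ and $M\approx e^{\lambda}$. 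Since $\mu\approx 4\lambda+1>\lambda$, the threshold lands at $t\approx e^{\lambda}$, so $t/\mathbb{E}[\mathrm{OPT}]\approx e^{(\mu-1)/4}$. The term is nevertheless harmless, for a different reason: do not charge each held sub-threshold item $t$. Given which positions fall below $t$, their values are i.i.d.\ copies of $X\mid X\le t$, whose mean is at most $\mathbb{E}[X]=M$ (more crudely, Markov gives $t\le M/(1-p)$). So every held item costs $O(M)$ in expectation. This is exactly what the paper uses, implicitly, when it says the failure event contributes at most $rM\cdot\frac1M$.

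Your fallback estimate has a similar loose end. The bound $\mathbb{E}[X\mid X>t]\le M/(1-p)$ is unbounded as $p\to 1$, and since $\mu$ is real, $1-p$ can be arbitrarily small. Meanwhile Chernoff only gives $\Pr[Z_1<r]\le M^{-1.6}$; that part of your bookkeeping is fine, since $(\mu-r)^2/(2\mu)=8\ln^2 M/(4\ln M+r)\ge 1.6\ln M$. So ``handling $p>1/2$ directly'' is not automatic. The paper avoids conditioning on $X>t$ altogether by measuring failure with $S$, the number of sub-threshold values among the first $n-r$ arrivals. On $\{S\ge r\}$ the algorithm still gets $\mathrm{OPT}$. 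The fallback items lie among the last $r$ arrivals, which are independent of $S$ and so have unconditional mean $M$. Combined with $\mathbb{E}[\mathrm{OPT}]\ge r\,\mathbb{E}[\min_j X_j]=r$, which you did not use, it suffices that $\Pr[S<r]\le 1/M$. This follows from $\mathbb{E}[S]\ge\mu-r=4\ln M$ and $r\le\ln M$.

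In the choice count, $x\mapsto r\ln(\max(x,r)/r)$ is not concave: its slope jumps from $0$ to $1$ at $x=r$, so Jensen does not apply as written. Instead bound $r(H_m-H_r)\le r\ln(1+m/r)$, which is concave in $m$, as the paper does with $\ln(1+Z_1)$. Finally, in both your argument and the paper's, the result must be read as $O(r+\min\{\cdots\})$, since the stated expression vanishes at $r=\ln M$ while at least $r$ choices are always needed.
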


\begin{theorem}
For i.i.d.\ variables and $r \ge \ln M$, the algorithm above is constant competitive and uses $ O\!\left(r\right)$ choices in expectation.
\end{theorem}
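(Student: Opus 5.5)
The statement to prove is the second multi-unit theorem: for i.i.d. variables with $r \ge \ln M$, Algorithm 2 is constant competitive and uses $O(r)$ expected choices. I would mirror the two-part analysis of Section~\ref{sec:algorithm}, bounding the competitive ratio and the number of choices separately. Normalize so that the single-variable expectation is $M$. Since $r\ge \ln M$, the target mass is $\sum_i p_i=\min\{5r,n\}$, because $\max\{4\ln M+r,5r\}=5r$. With i.i.d. variables, $p_i=p$ and the count $Z_1$ of items below $t$ is $\mathrm{Bin}(n,p)$ with mean $\mu=\min\{5r,n\}$. If $\mu=n$ then $t$ is above everything, the algorithm behaves like the offline optimum, and it is $1$-competitive. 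From here on I assume $\mu=5r<n$.

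\textbf{Competitive ratio.} Let $B$ be the event $Z_1\ge r$. On $B$ the algorithm first collects $r$ items below $t$ and then keeps replacing its largest held item by any strictly smaller new value. This rule alone does not guarantee the global $r$ smallest, since an item that arrives before the $r$ sub-threshold items have appeared and is later needed is lost. I would handle this by coupling the algorithm's final set with the $r$ smallest items below $t$ in the suffix, together with an exchange argument, and conclude that on $B$ the cost is at most a constant times $\mathrm{OPT}$. If this coupling fails I would fall back on interpreting step 3 as comparing against all observed values. On $B^c$ the algorithm pays at most the sum of $r$ arbitrary items, whose expectation is at most $rM$. By a Chernoff bound with mean $5r$, $\Pr[B^c]\le e^{-c r}$ for a constant $c$ (roughly $(4/5)^2/2\cdot 5\ge 1.6$). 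Since $r\ge\ln M$, this gives $e^{-cr}\le M^{-1.6}$, so the $B^c$ contribution is at most $rM\cdot M^{-1.6}$. Because $\mathbb{E}[\mathrm{OPT}]\ge r\,\mathbb{E}[\min_j X_j]=r$, this contribution is $O(\mathbb{E}[\mathrm{OPT}])$. I would also check that $B^c$ is independent enough of the values that conditioning does not inflate the $rM$ bound; by i.i.d. symmetry the items taken from the last arrivals, conditioned on being above $t$, have conditional mean at most $M/(1-p)$, which is $O(M)$ here.

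\textbf{Number of choices.} The first $r$ choices are fixed. Every later choice is a new value that is below $t$ and ranks among the $r$ smallest seen so far below $t$. Conditioned on $Z_1=m$, the sub-threshold items arrive in uniformly random relative order, so the $j$-th of them is a switch with probability at most $\min\{1,r/j\}$. The expected number of switches is therefore at most $r+r\ln(m/r)^+$. On $B^c$ I add at most $r$ forced picks. Taking expectations and applying Jensen to the concave function $m\mapsto r\ln(m/r)$, as in Section~\ref{sec:algorithm}, gives $O(r+r\ln(5r/r))=O(r)$.

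\textbf{Main obstacle.} The step I expect to be hardest is the competitive-ratio claim on $B$, namely controlling what the algorithm loses by having committed to early items. I would first try the $k$-th record, order-statistics coupling argument; if that fails, I would bound the loss by the contribution of items below $t$ that arrive before the $r$-th sub-threshold item.
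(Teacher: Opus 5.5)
Your outline matches the paper's: a Chernoff bound on the number of sub-threshold items, a failure cost of at most $rM$ times a probability at most $1/M$ compared against $\mathbb{E}[\mathrm{OPT}]\ge r$, and an $r$-record count for the choices. The gap is the competitive-ratio step on $B$, which you flag as the ``main obstacle''. It is left unproved, and its premise is false: on $B$ the algorithm pays exactly $\mathrm{OPT}$, with no coupling or exchange argument needed.

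\begin{itemize}
\item On $B$ at least $r$ values lie below $t$, so all $r$ of the prophet's items lie below $t$.
\item While fewer than $r$ items are held, every sub-threshold item is accepted. Any item skipped in that phase is therefore above $t$ and can never be ``later needed''.
\item Afterwards, step 3 keeps exactly the $r$ smallest sub-threshold values seen so far, and values above $t$ never displace them.
\item Forced picks are also harmless on $B$. If forcing starts while $\ell<r$ items are held, the earlier positions contain exactly $\ell$ sub-threshold items, so on $B$ all of the last $r-\ell$ arrivals are below $t$.
\end{itemize}

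You should drop the fallback of reading step 3 as comparing against all observed values, since that would prove the statement for a different algorithm.

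The paper avoids the forced-phase discussion by using a smaller good event: at least $r$ of the first $n-r$ arrivals are below $t$ (mean $\ge 4r$, Chernoff with $\delta=3/4$). This also makes the failure event independent of the last $r$ arrivals, so bounding the failure cost by $rM$ needs no conditioning.

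With your event $B$ you do need the conditional-mean bound, and $M/(1-p)$ is not $O(M)$ in general. If $p\le 1/2$ the factor $1/(1-p)$ is at most $2$. If $p>1/2$, then $n<10r$ and $1/(1-p)=n/(n-5r)$ can be as large as $n$. The bound still closes in that case, because the contribution is at most $10r^2Me^{-1.6r}\le 10r^2e^{-0.6r}=O(r)$ using $M\le e^r$, but you should say so.

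Finally, $m\mapsto r\ln(m/r)^+$ is not concave, since it has a convex kink at $m=r$; use the concave upper bound $r\ln(1+m/r)$ instead. In this regime you can skip records and Jensen entirely. Every choice is either a sub-threshold item or one of at most $r$ forced picks, so the expected number of choices is at most $\mathbb{E}[Z_1]+r\le 6r$.
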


\subsection{Analysis}
For the analysis we assume $\mathbb{E}[\min_jX_j]=1$ which means that $\min _j\mathbb{E}[X_j]=M$.
We analyze the algorithm separately for its competitive ratio and its expected number of choices.  
The argument closely follows that of Section~\ref{sec:algorithm}, except that the algorithm now maintains up to $r$ values simultaneously and switches only when a newly observed value is smaller than the highest value currently held.  
The threshold $t$ is chosen so that
\[
\sum_i p_i = \min\{\max\{4\ln M+r,\,5r\},\,n\}.
\]

\subsubsection{Competitive ratio analysis}

The reasoning is analogous to the single-unit case.  
The algorithm may fail to achieve the optimal $r$-minimum only if from the $n-r$ first values less than  $r$ values fall below the threshold~$t$.  We will denote $w_i$ as the indicator s.t. $x_i\le t$ and $S=\sum_{i=1}^{n-r}w_i$.

\paragraph*{Case 1: $r\le \ln M$.}
Here $\sum_{i=1}^n p_i=4\ln M+r$.
And since for each variable $x_i$ the maximum $p_i$ can be is 1. We get that $\sum_{i=1}^{n-r}p_i\geq 4 \ln M$
We wish to bound the probability that from the first $n-r$ variables, fewer than $r$ are below the threshold $t$. Hence, $\mathbb{E}[S]\ge 4\ln M$ and $r\le (1-\delta)\mathbb{E}[S]$ for $\delta=1-\frac{r}{\mathbb{E}[S]}$.  
By the Chernoff bound,
\[
\Pr(S<r) \le \exp\!\left(-\frac{\delta^2\,\mathbb{E}[S]}{2}\right)
\le \exp\!\left(-\frac{(1-\frac{r}{4\ln M})^2\,4\ln M}{2}\right)
\le e^{-2\ln M + r} \le e^{-\ln M} = \frac{1}{M}.
\]
Thus, with probability at least $1-1/M$, the algorithm observes at least $r$ samples below the threshold from the first $n-r$ variables,  and obtains the true $r$-minimum.  
The failure event contributes at most $rM\cdot(1/M)=r$ to the expected cost.

\paragraph*{Case 2: $r\ge \ln M$.}
By the same reasoning  $\mathbb{E}[S] \ge4r$ and  taking $\delta=3/4$.  
By the Chernoff bound we get:
\[
\Pr(S<r)\le \exp\!\left(-\frac{\delta^2\,\mathbb{E}[S]}{2}\right)
\le e^{-(3/4)^2 2r}=e^{-9r/8}\le e^{-r}\le \frac{1}{M}.
\]
Again, the failure event contributes at most $r$ to the expected cost.

\paragraph*{Case 3: $\sum_{i=1}^np_i = n$}. Here, the threshold $t$ equals $\infty$. Hence, all samples are relevant, and the algorithm always achieves the minimum, which means it is 1 competitive.
Hence, the following holds for every r:
\[
\mathbb{E}[\text{ALG}]
\;\le\; \mathbb{E}[\text{OPT}] + rM\cdot\frac{1}{M}
\;\le\; 2\,\mathbb{E}[\text{OPT}],
\]
so the algorithm is 2-competitive.  
Since the expectation of the minimum value is 1 than $\mathbb{E}[\text{OPT}] \ge r$ (the expectation of the $r$ smallest values).

\subsubsection{Expected number of choices}

We extend the single-unit record-minima argument to the $r$-unit case.  
Initially, the algorithm accepts the first $r$ items.  
Each subsequent item replaces the current maximum among the $r$ held items if it is smaller.  
Hence, the expected number of switches equals the expected number of times a new element enters the current $r$-minimum set:
\[
\mathbb{E}[\text{Choices}] 
= r + \sum_{i=r+1}^{n} \frac{r}{i}
= r + r(H_n - H_r)
= \Theta\!\left(r\ln\!\tfrac{n}{r}\right),
\]
where $H_m$ denotes the $m$-th harmonic number.

We now distinguish three parameter regimes according to $\sum_i p_i$:
\begin{enumerate}
    \item $\sum_i p_i = 4\ln M+r$.  
          The analysis mirrors Section~\ref{sec:algorithm}; scaling by~4 does not affect asymptotics.  
          Since the algorithm tracks $r$ values, the expected number of choices is $O\!\left(r\ln\frac{\ln M}{r}+\ln r\right)=O\!\left(r\ln\frac{\ln M}{r}\right)$.
    \item $\sum_i p_i = 5r$.  
      By the record-minima argument, the expected number of choices among $O(r)$ samples is 
      $r + r(H_{4r}-H_r) = O(r)$.  
     
    \item $\sum_i p_i = n$.  
          This regime occurs when all items are relevant (i.e., all below the threshold).  
          In that case, every incoming item may potentially replace one of the currently held $r$ items, exactly following the record-minima process generalized to the $r$-unit case.  
          As derived above,
          \[
          \mathbb{E}[\text{Choices}] = r + r(H_n - H_r) = \Theta\!\left(r\ln\frac{n}{r}\right).
          \]
\end{enumerate}
Hence, when $r\leq \ln M$, the algorithm achieves a constant competitive ratio while using
\[
O\!\left(\min\!\left\{r\ln\!\tfrac{\ln M}{r},\;r\ln\!\tfrac{n}{r}\right\}\right)
\]
choices in expectation.  
When $r\ge \ln M$, our threshold choice enforces $\sum_i p_i=\min\{4r,n\}$, so the expected number of relevant samples is at most $4r$; consequently, the algorithm uses
$O(r)$ choices in expectation.

\section{Lower bound for the multi-unit }
\label{sec:multiunit_lowerbound}
We now consider the case where the algorithm selects exactly \(r\) items and its performance is measured by the sum of their values. We show that if the algorithm makes exactly $r$ choices, then the competitive ratio in this case grows exponentially with \(n\).

We consider the minimization variant with exactly $r$ selections. Let $X_1,\dots,X_n$ be i.i.d.\ where each variable gets 3 possible values. Specifically, 
\[
\Pr[X_i=1]=1-\tfrac{1}{3^n}-\tfrac{1}{9^n},\qquad
\Pr[X_i=A]=\tfrac{1}{3^n},\qquad
\Pr[X_i=B]=\tfrac{1}{9^n},
\]
where
\[
A=3^{\,n^2-n(r-1)}=3^{\,n^2-nr+n},\qquad
B=9^{\,n^2-n(r-1)}=3^{\,2n^2-2nr+2n}.
\]
The prophet benchmark is the sum of the $r$ \emph{smallest} order statistics:
\[
\mathbb{E}[OPT(r)]=\mathbb{E}\Big[\sum_{j=1}^r X_{(j)}\Big].
\]

\begin{theorem}
For every $1\le r\le n-1$, any online algorithm that must select exactly $r$ items satisfies
\[
\frac{\mathbb{E}[ALG(r)]}{\mathbb{E}[OPT(r)]} \;\ge\; \frac{1}{n^3}\left(\frac{3}{2}\right)^n \ .
\]
In particular, the competitive ratio is exponential in $n$.

\end{theorem}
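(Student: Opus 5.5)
The plan is to bound $\mathbb{E}[OPT(r)]$ from above by $\mathrm{poly}(n)\cdot 2^{n}$ (times $r$), and then show that every deterministic online algorithm restricted to exactly $r$ selections pays at least about $3^{n}/\mathrm{poly}(n)$. For randomized algorithms we average over their coins, so it suffices to handle deterministic ones. This follows the template of Section~\ref{sec:worstcase}: case analysis along one critical prefix of realizations.

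\textbf{Upper bound on OPT.} Always $OPT(r)\ge r$. The value $A$ enters the $r$ smallest only if fewer than $r$ variables equal $1$, that is, if at least $n-r+1$ variables lie in $\{A,B\}$. Similarly, $B$ enters only if at least $n-r+1$ variables equal $B$. A union bound over the positions of these $n-r+1$ variables gives
\[
\mathbb{E}[OPT(r)] \le r + r\binom{n}{n-r+1}\Bigl(\tfrac{2}{3^n}\Bigr)^{n-r+1}A + r\binom{n}{n-r+1}\Bigl(\tfrac{1}{9^n}\Bigr)^{n-r+1}B .
\]
By the choice $A=3^{n(n-r+1)}$ and $B=9^{n(n-r+1)}$, the powers of $3^n$ and $9^n$ cancel exactly. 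The right-hand side is therefore at most $r\,2^n\cdot 2^{n-r+1}+r2^n+r$. I would then sharpen this: when exactly $n-r+1$ variables are in $\{A,B\}$, only one of them is charged, so the binomial factor is really paid against $2^{n-r+1}$ rather than squared. The target is $\mathbb{E}[OPT]\le n^{2}\,2^{n}$ or similar. This bookkeeping is the step I expect to be delicate.

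\textbf{Lower bound on ALG.} Fix a deterministic algorithm and consider the realization prefix in which the first $n-r$ items are $A$. The algorithm must end with exactly $r$ items, so along this prefix it either accepts some $A$ or it does not.
\begin{itemize}
\item If it accepts an $A$, we look at the continuation in which every remaining item is $1$. There are more than $r$ ones available, so OPT contains no $A$, while ALG pays at least $A$ on an event of probability about $3^{-n(\text{prefix length})}$.
\item If it never accepts an $A$ in the prefix, it is forced to take all of the last $r$ items. We then look at the continuation in which one of these last items is $B$, which has probability $9^{-n}$ per position, and ALG pays $B$.
\end{itemize}
In both branches the exponents in $A$ and $B$ were chosen so that probability times value is of order $3^{n}$ or larger. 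I would choose the prefix length, or the index of the first accepted $A$ as in the classes $ALG_k$ of Section~\ref{sec:worstcase}, so that each branch contributes at least $3^{n}/n$.

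\textbf{Main obstacle.} The hardest step is making the exponents line up: the prefix event probability $(3^{-n})^{\ell}$ must be matched against $A$ or $B$ with exponent $n(n-r+1)$ for every possible first-deviation index $\ell$. I would first try taking the critical prefix to be exactly $n-r+1$ copies of $A$. On that event ALG must either have taken an $A$ that was not needed or will be forced onto the tail, and this gives the cancellation directly. Dividing the $\Omega(3^n/n)$ bound by the $O(n^2 2^n)$ OPT bound then yields $\frac{1}{n^3}(3/2)^n$.
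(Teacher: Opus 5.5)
Your skeleton is the paper's proof. You union-bound $\mathbb{E}[OPT(r)]$ over the events that an $A$ (resp.\ a $B$) enters the $r$ smallest. You then lower-bound ALG on the event $\mathcal F$ that the first $n-r$ arrivals are all $A$, splitting on whether an $A$ was accepted in that prefix or the algorithm is forced to take all $r$ tail items. However, two steps as you wrote them would not go through.

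\textbf{The OPT bound.} Write $m=n-r+1$. Your middle term equals $r\binom{n}{m}2^{m}$, and $\max_m\binom{n}{m}2^m$ is of order $3^n/\sqrt n$, attained near $m\approx 2n/3$, i.e.\ $r\approx n/3$. For such $r$ your bound on $\mathbb{E}[OPT(r)]$ is of order $\sqrt n\,3^n$, so dividing $\mathbb{E}[ALG]\approx 3^n$ by it yields nothing. The sharpening you suggest (only one large value is charged when exactly $m$ variables lie in $\{A,B\}$) does not address this. The factor $2^{m}$ comes solely from replacing $p_A+p_B$ by $2p_A$. The fix is that $p_A+p_B=3^{-n}(1+3^{-n})$, so
\[
(p_A+p_B)^{m}\le e^{n3^{-n}}\,3^{-nm}\le 2\cdot 3^{-nm}.
\]
This gives $\mathbb{E}[OPT(r)]\le r\bigl(1+3\binom{n}{m}\bigr)\le 4n\,2^n$. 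The paper instead separates the event that some $A$ but no $B$ is used from the event that some $B$ is used.

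\textbf{The critical prefix.} Your main-body choice of $n-r$ copies of $A$ is the right one. The first-deviation bookkeeping you worry about is unnecessary: on $\mathcal F$ the prefix is fully determined, so a deterministic algorithm is in exactly one of the two states. Using linearity of expectation over the $r$ forced tail items and $r\le n-1$,
\[
\mathbb{E}[ALG]\ \ge\ \Pr[\mathcal F]\cdot\min\{A,\ r\,p_B B\}\ =\ 3^{-n(n-r)}\min\bigl\{3^{n(n-r+1)},\ r\,3^{2n(n-r)}\bigr\}\ \ge\ 3^n .
\]
No particular continuation needs to be fixed, since you compare with the global $\mathbb{E}[OPT(r)]$.

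By contrast, the ``first try'' in your last paragraph, a prefix of $n-r+1$ copies of $A$, fails. The probability-times-value product is $3^{-n(n-r+1)}\cdot A=1$. Moreover, on that event only $r-1$ items remain, so OPT must also pay an $A$, and there is no separation at all. The offset by one between the prefix length $n-r$ and the exponent $n(n-r+1)$ of $A$ is exactly where the factor $3^n$ comes from.

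With both corrections you get a ratio of at least $3^n/(4n\,2^n)\ge \frac{1}{n^3}(3/2)^n$ for $n\ge 2$.
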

\begin{proof}
By Yao's principle, it suffices to lower bound the expected cost of any deterministic online
algorithm under the i.i.d.\ distribution above.

\noindent\paragraph*{Upper bound on $\mathbb{E}[OPT(r)]$.}
Denote $p_A=3^{-n}$ and $p_B=9^{-n}$. Let $N_A$ (resp.\ $N_B$) be the number of $A$’s (resp.\ $B$’s).
If at least $r$ ones appear, the prophet pays exactly $r$. Otherwise, among the $r$ smallest,
the prophet includes some non-ones. We will separate the possible contributions by value type.

\paragraph*{Contribution of $A$.} If the prophet includes any $A$, then necessarily $N_A\ge 1$ and fewer than
$r$ ones appear. In particular, the event $\{N_A+N_B\ge n-r+1\}$ is necessary to force
at least one $A$ into the $r$ smallest. In total, there are less than $n^2$ options for $N_A,N_B$ and the highest probability that the prophet includes $A$ is when $N_A=n-r+1$ and $N_B= 0$. This happens with probability of less than:
\[
\binom{n}{n-r+1}p_A^{n-r+1} \leq 2^n p_A^{n-r+1}\ .
\]
Since there are at most $n^2$ choices for $N_A,N_B$ we can upper bound the probability of the occurrence that the prophet must take at least one $A$ by $n^22^np_A^{n-r+1}$. Note that for these cases OPT is at most $r3^{\,n^2-nr+n}$ so we can upper bound the expectation of OPT if it selects $A$ and not $B$ by:
\[
n^22^np_A^{n-r+1}\cdot r3^{\,n^2-nr+n}\leq n^32^n3^{-n^2+nr-n}\cdot 3^{n^2-nr+n} = n^32^n \ .
\]

\paragraph*{Contribution of $B$.} For the prophet to take $B$ it must be that $N_B\geq n-r+1$. This happens with probability of less than:
\[
\binom{n}{n-r+1}p_B^{n-r+1} \leq 2^n p_B^{n-r+1} \ .
\]
Since there are at most $n$ choices for $N_B$ we can upper bound the probability of the occurrence that the prophet must take at least one $B$ by $2^n p_B^{n-r+1}$. Note that for these cases OPT is at most $r9^{n^2-nr+n}$ so we can upper bound the expectation of OPT if it selects $B$ by:
\[
n2^n p_B^{n-r+1}\cdot r9^{n^2-nr+n}\leq n^2 2^n 9^{-n^2-nr+n}\cdot9^{n^2+nr-n}=n^22^n \ .
\]

Combining and adding the event where at least $r$ ones appear,
\[ 
\mathbb{E}[OPT(r)]\;\le\; n \;+\; n^3\cdot 2^n \;+\; n^2\cdot 2^n \;\le\; 3n^3\cdot 2^ n
\]

\noindent\paragraph*{Lower bound on $\mathbb{E}[ALG]$.}
Let $\mathcal{F}$ be the event that the first $n-r$ arrivals are all $A$:
\[
\Pr[\mathcal{F}] \;=\; p_A^{\,n-r} \;=\; (3^{-n})^{\,n-r} \;=\; 3^{-n(n-r)}.
\]
Fix any deterministic online policy and condition on $\mathcal{F}$. After observing the prefix of
$n-r$ many $A$'s, the policy is in one of two states:

\emph{(i) It has accepted at least one $A$ in the prefix.}
Then it has already incurred cost at least $A$.

\emph{(ii) It has accepted none in the prefix.}
It must accept all of the last $r$ arrivals to reach exactly $r$ picks. Among these last $r$
positions, a $B$ appears with probability
\[
\Pr[\text{at least one $B$ in last $r$}] \;=\; 1-(1-p_B)^r \;\ge\; r\,p_B \;=\; r\cdot 9^{-n}.
\]
On this sub-event, the policy is forced to include a $B$, paying at least $B$. Therefore its
conditional expected cost is at least $r\,p_B\,B = r\cdot 9^{-n}\cdot 9^{n^2-nr+n}
= r\cdot 9^{\,n^2-nr}$.

Comparing the two lower bounds under $\mathcal{F}$, note that for sufficiently large $n$
\[
A \;=\; 3^{\,n^2-nr+n}
\;\;\le\;\; r\cdot 9^{\,n^2-nr}
\]
so in either case the conditional expected cost given $\mathcal{F}$ is at least $A$.
So the expectation this event adds to the total expectation is more than:
\[
\Pr[\mathcal{F}]\cdot 3^{n^2-nr+n} =3^{-n^2+nr}\cdot 3^{n^2-nr+n} = 3^n \ .
\]
Hence, we get that $\mathbb{E}[ALG]\geq 3^n$ so the competitive ratio is at least:
\[
\frac{\mathbb{E}[ALG(r)]}{\mathbb{E}[OPT(r)]}\geq \frac{3^n}{n^32^n} =\frac{1}{n^3}\left(\frac{3}{2}\right)^n
\]
which is exponential in $n$ which completes the proof.
\end{proof}

\section{Examples of M}
\label{App: M examples}
In this section, we compute the parameter
\[
M \;=\; \frac{\min_j \mathbb{E}[X_j]}{\mathbb{E}[\min_j X_j]}
\]
for two simple examples.

\paragraph*{Uniform distribution $U[0,1]$.}
Let $X_1,\dots,X_n$ be i.i.d.\ with $X_j \sim U[0,1]$.
Then $\mathbb{E}[X_j] = 1/2$ for all $j$, so
$\min_j \mathbb{E}[X_j] = \frac{1}{2}$.
The minimum of $n$ i.i.d.\ uniform $[0,1]$ variables has expectation
$\mathbb{E}[\min_j X_j] = 1/(n+1)$. Hence
\[
M
= \frac{1/2}{1/(n+1)}
= \frac{n+1}{2}
= \Theta(n).
\]
Thus for a ``nice'' bounded distribution such as $U[0,1]$, the difficulty
parameter $M$ grows linearly with $n$.

\paragraph*{A truncated heavy-tailed distribution.}
Fix $Z \le e^n$ and let $X_1,\dots,X_n$ be i.i.d.\ with distribution $D_Z$
whose CDF is
\[
F(x)=
\begin{cases}
1 - x^{-1/n}, & 1 \le x < Z, \\[1mm]
1, & x \ge Z.
\end{cases}
\]
For $1 \le x < Z$ the density is
\[
f(x) = F'(x) = \frac{1}{n} x^{-1 - 1/n},
\]
and there is a point mass at $x=Z$ of size
\[
\mathbb{P}(X_j = Z)
= 1 - \lim_{x \to Z^-} F(x)
= \frac{1}{Z^{1/n}}.
\]

\emph{Expectation of a single variable.}
We have
\[
\mathbb{E}[X_j]
= \int_1^Z x f(x)\,dx + Z \cdot \mathbb{P}(X_j = Z)
= \int_1^Z x \cdot \frac{1}{n} x^{-1 - 1/n} \, dx
  + Z \cdot \frac{1}{Z^{1/n}}.
\]
The integral term simplifies to
\[
\int_1^Z \frac{1}{n} x^{-1/n} \, dx
= \frac{1}{n} \cdot \frac{n}{n-1} \bigl(Z^{1-1/n} - 1\bigr)
= \frac{Z^{1-1/n} - 1}{n-1}.
\]
Adding the atom at $Z$ gives
\[
\mathbb{E}[X_j]
= \frac{Z^{1-1/n} - 1}{n-1} + Z^{1-1/n}
= \frac{Z n - Z^{1/n}}{Z^{1/n}(n-1)}
= \Theta\!\bigl(Z^{1-1/n}\bigr)
\quad\text{for fixed $n$ and large $Z$}.
\]
Since all variables have the same distribution, $\min_j \mathbb{E}[X_j]
= \mathbb{E}[X_j] = \Theta(Z^{1-1/n})$.

\emph{Expectation of the minimum.}
Let $Y = \min_j X_j$. For $1 \le t < Z$ we have
\[
\mathbb{P}(X_j \ge t)
= 1 - F(t)
= \frac{1}{t^{1/n}},
\]
so
\[
\mathbb{P}(Y \ge t)
= \mathbb{P}(X_1 \ge t,\dots,X_n \ge t)
= \left(\frac{1}{t^{1/n}}\right)^n
= \frac{1}{t}.
\]
For $0 \le t < 1$ we have $\mathbb{P}(Y \ge t) = 1$, and for $t \ge Z$,
$\mathbb{P}(Y \ge t) = 0$. Using the standard identity
$\mathbb{E}[Y] = \int_0^\infty \mathbb{P}(Y \ge t)\,dt$ for nonnegative
random variables, we obtain
\[
\mathbb{E}[Y]
= \int_0^1 1 \, dt
+ \int_1^Z \frac{1}{t} \, dt
= 1 + \ln Z.
\]

Combining the above,
\[
M
= \frac{\min_j \mathbb{E}[X_j]}{\mathbb{E}[\min_j X_j]}
= \Theta\!\left(\frac{Z^{1-1/n}}{1 + \ln Z}\right).
\]
Thus, in this example, the ratio between the expected value of a single
variable and the expected minimum can be huge. In particular, if
$Z = e^n$ we obtain
\[
M = \Theta\!\left(\frac{e^{n}}{n}\right),
\]
so $M$ can be exponential in $n$ even though $Z$ is only $e^n$.
\end{document}